\definecolor{webgreen}{rgb}{0,.5,0}
\definecolor{webbrown}{rgb}{.8,0,0}
\definecolor{emphcolor}{rgb}{0.95,0.95,0.95}
\ifpdf \hypersetup{pdftex,
            pdfstartview=FitH, 
            bookmarksopen=true,
            bookmarksnumbered=true
} \else \hypersetup{dvips} \fi
\numberwithin{equation}{section}
\newtheorem{theorem}{Theorem}[section]
\newtheorem{proposition}{Proposition}[section]
\newtheorem{remark}{Remark}[section]
\newtheorem{lemma}{Lemma}[section]
\newtheorem{assump}{Assumption}[section]
\numberwithin{remark}{section} \numberwithin{proposition}{section}
\numberwithin{corollary}{section}
\newcommand {\R}{\mathbb{R}}
\newcommand {\F}{\mathcal{F}}
\newcommand {\p}{\mathbb{P}}
\newcommand {\E}{\mathbb{E}}
\newcommand{\diff}{{\rm d}}
\newcommand{\lev}{L\'{e}vy }
\title[Optimal Capital Structure with Scale effects]{Optimal Capital Structure with Scale effects under spectrally negative L\'evy models$^*$}
\thanks{$*$\,This version: \today.  An earlier version of this paper was circulated as
``Toward a Generalization of the Leland-Toft Optimal Capital Structure Model".  The authors thank S.\ Boyarchenko, J.-C. Duan, M.\ Kijima, S.\ Kou, A.\ Kyprianou, K.\ Nishide, M.\ Nishihara, A.\ Novikov, J.\ Sekine, T.\ Suzuki as well as an anonymous referee and the editor for helpful suggestions and remarks.
K.\ Yamazaki is in part supported by MEXT KAKENHI Grant No.\ 22710143 and by JSPS KAKENHI Grant No.\  2271014.}
\author[B. A. Surya]{\,\,Budhi Arta Surya$^\dag$\,}\thanks{$\dag$\,School of Business and Management,
Bandung Institute of Technology,
Jalan Ganesha No.10, Bandung 40132, Indonesia. Email: \mbox{{\em
        budhi.surya@sbm-itb.ac.id}}}
\author[K. Yamazaki]{\,Kazutoshi Yamazaki$^\ddag$}\thanks{$\ddag$\, (corresponding author) Department of Mathematics,
Faculty of Engineering Science, Kansai University, 3-3-35 Yamate-cho, Suita-shi, Osaka 564-8680, Japan.  Email: \mbox{{\em
kyamazak@kansai-u.ac.jp}}. Phone: +81-6-6368-152.  }
\begin{document}
\maketitle
\begin{abstract}
The optimal capital structure model with endogenous bankruptcy was first studied by Leland \cite{Leland_1994} and Leland and Toft  \cite{Leland_Toft_1996}, and was later extended to the spectrally negative \lev model by Hilberink and Rogers \cite{ Hilberink_Rogers_2002} and Kyprianou and Surya \cite{ Kyprianou_Surya_2007}.  
%
This paper  incorporates scale effects by allowing the values of bankruptcy costs and tax benefits to be dependent on the firm's asset value.  
By using the fluctuation identities for the spectrally negative \lev process, we obtain a candidate bankruptcy level as well as a sufficient condition for optimality. The optimality holds in particular when, monotonically in the asset value, the value of tax benefits is increasing, the loss amount at bankruptcy is increasing, and its proportion relative to the asset value is decreasing. The solution admits a semi-explicit form in terms of the scale function.  A series of numerical studies are given to analyze the impacts of scale effects on the  bankruptcy strategy and the optimal capital structure.
\end{abstract}
{\noindent \small{\textbf{Keywords:}\,  Credit risk, spectrally negative \lev processes, scale functions, optimal stopping }\\
\noindent \small{\textbf{Mathematics Subject Classification (2010):}\,60G40, 60G51, 91G40}}

\section{Introduction} \label{section_introduction}
We revisit the Leland-Toft optimal capital structure model \cite{Leland_Toft_1996} with endogenous bankruptcy.  A firm is partly financed by debt of equal seniority that is continuously retired and reissued so that its maturity profile is kept constant through time.  It distributes a continuous stream of coupon payments to bondholders, on which the firm receives tax benefits. The model assumes that the shareholders have the right to determine the time of bankruptcy so as as to maximize the firm's equity value subject to the limited liability constraint.  This gives a framework for obtaining the optimal capital structure that solves the tradeoff between minimizing bankruptcy costs and maximizing tax benefits.



This model was first studied by \cite{Leland_1994,Leland_Toft_1996} where they assumed a geometric Brownian motion for the firm's asset value, and was later extended to a \lev model, among others, by \cite{Chen_Kou_2009,Dao_Jeanblanc_2012, Hilberink_Rogers_2002,Kyprianou_Surya_2007,  Courtois_2008}.  By introducing jumps,  it allows the value of bankruptcy costs to be stochastic, and more importantly resolves the contradictory conclusion under the continuous diffusion model that the credit spreads go to zero as the maturity decreases to zero.  The problem reduces to a non-standard optimal stopping problem, and the optimal bankruptcy level can be obtained via the continuous/smooth fit principle.  It was solved, in particular, for the double exponential jump diffusion process \cite{Chen_Kou_2009}, for stable processes \cite{Courtois_2008}, and for a general spectrally negative \lev process \cite{ Hilberink_Rogers_2002,  Kyprianou_Surya_2007}.

%
%
%
%
%
%
%
%

%

In this paper, we add more dynamics to the \lev model by incorporating the ``scale effect" or the inhomogeneity with respect to the firm size.
Despite the fascinating contributions of the aforementioned papers, there are several assumptions on the bankruptcy costs and tax benefits, which are rather artificially imposed to derive explicit/analytical solutions.   This paper is aimed to relax these assumptions.  Here we review the assumptions required in \cite{Leland_Toft_1996} and its extensions and also give empirical evidence that motivates us to relax these assumptions.


\textbf{Bankruptcy costs:} \label{section_introduction_bankruptcy_cost}
Regarding the bankruptcy costs, it is commonly assumed that their value is proportional to the firm's asset value at the time of bankruptcy.   However, this is empirically rejected, and it is widely accepted that the amount of bankruptcy costs as a ratio of the asset value is dependent on the size of the firm. Following \cite{Ang_et_al_1982,Warner_1976}, we call this the \emph{scale effect}.

One of the earliest empirical studies was conducted by Warner  \cite{Warner_1976} where he observed this effect based on 
the direct bankruptcy costs of the  11 railroad companies that liquidated between 1933 and 1955.  In Figure \ref{figure_bankruptcy}, we apply linear regressions to Figures 1 and 2 of   \cite{Warner_1976}, which plot, respectively, the bankruptcy costs and their percentages relative to the market values of the firms.  Although this is not a careful analysis of these data, it is still reasonable to conclude that, while the cost tends to increase,  its proportion relative to the firm value tends to decrease.  Ang et al.\  \cite{Ang_et_al_1982} also confirmed this, observing via regression methods the strict concavity of the bankruptcy cost function based on the bankruptcy data between 1963 and 1978 in the Western District of Oklahoma.  The scale effect is also consistent with the more recent and comprehensive empirical studies conducted by \cite{Bris_2006} under corporate bankruptcy data in Arizona and New York between 1995 and 2001. 

\begin{figure}[tbp]
\begin{center}
\begin{minipage}{1.0\textwidth}
\centering
\begin{tabular}{cc}
\includegraphics[scale=0.35]{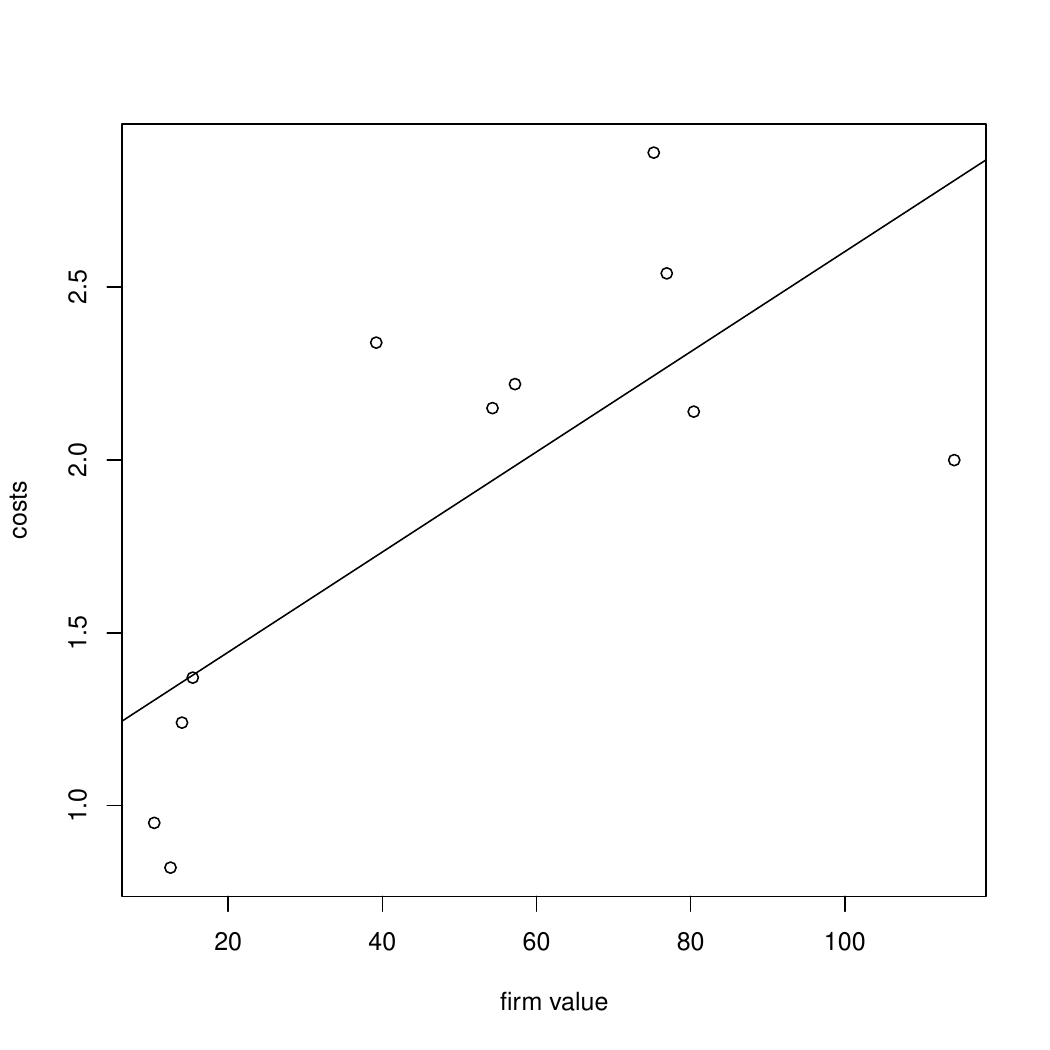}  & \includegraphics[scale=0.35]{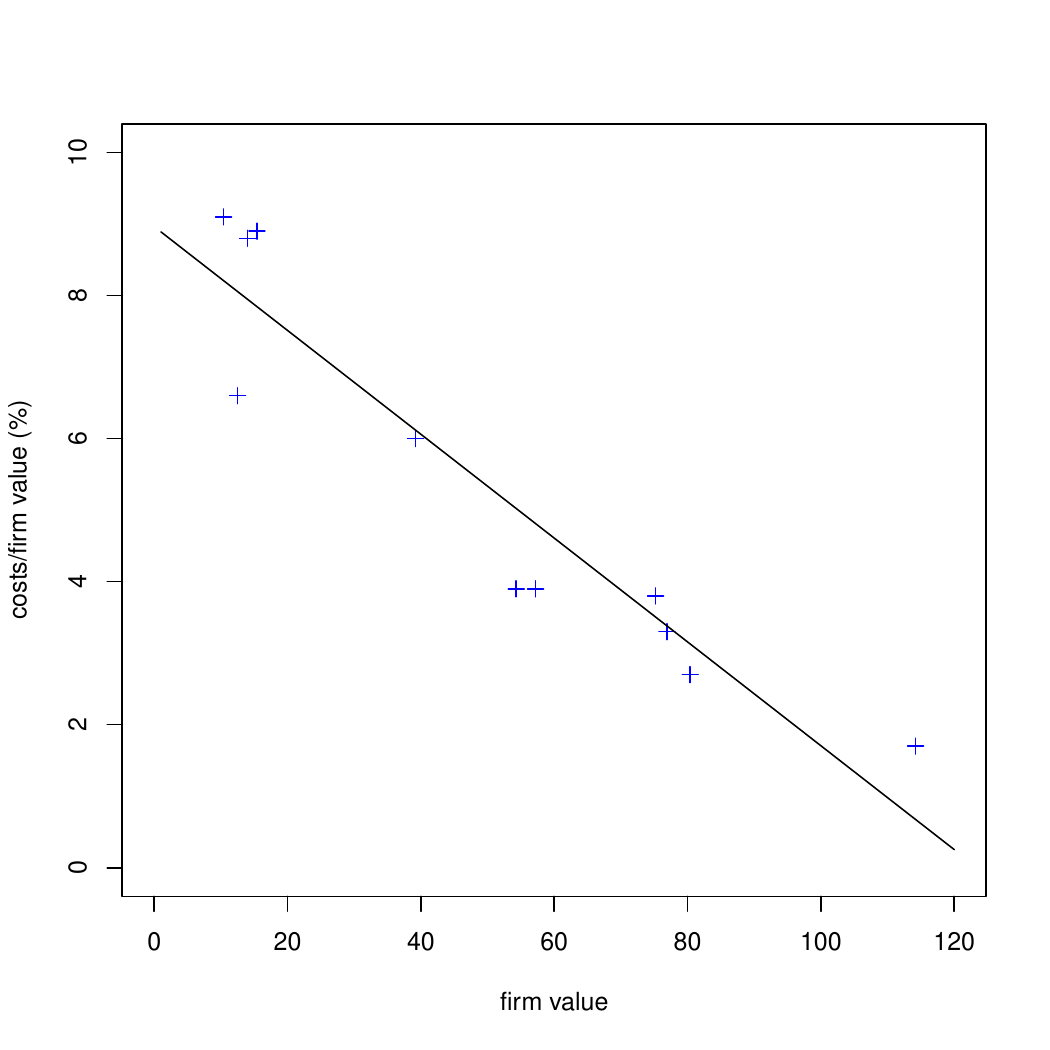} 
\end{tabular}
\end{minipage}
\caption{\small{(left) Bankruptcy costs and (right) percentage of bankruptcy costs with respect to the firm value.}} \label{figure_bankruptcy}
\end{center}
\end{figure}

These empirical results only consider the direct costs, typically consisting of legal and administrative costs.  On the other hand, indirect costs are in general much harder to compute, and the estimation tends to vary heavily on the nature of the data and also on the approximation methods.  
While the exact amount of indirect bankruptcy costs is hard to estimate, the existence of the scale effect is easy to be validated.
For example, as discussed in \cite{Franks_Torous,Thorburn_2000}, the time spent in bankruptcy can be used as a proxy for indirect bankruptcy costs. Bris et al.\ \cite{Bris_2006} analyzed the mean time in bankruptcy with respect to the firm size, both for Chapter 7 liquidations and Chapter 11 reorganizations (see \cite{Francois_2004} regarding the impact of Chapter 11 in comparison to Chapter 7.)  They observed that larger bankruptcies take longer to resolve, but the relative increase is small, implying the scale effect in indirect bankruptcy costs.


Due to the difficulty of estimating the exact bankruptcy costs, it is not possible to draw a general conclusion.
However, it is safe to state that the proportionality assumption of the bankruptcy costs relative to the asset value is an oversimplification, and one needs to incorporate the scale effect for  realistic models. Furthermore, the characteristics of the bankruptcy costs vary heavily across industries, regions and time periods, and this further motivates us to pursue more flexible formulations.

\textbf{Tax benefits:} \label{section_introduction_tax}
As for the tax benefits, empirical evidence also suggests an effect similar to the scale effect as in the case of bankruptcy costs.  In other words, the marginal effect of  taxable income reduction per dollar is dependent on the amount of the total taxable income that changes dynamically over time.

In the original Leland model \cite{Leland_1994} and the majority of its extensions, 
this dependency is ignored and a linear tax schedule, or equivalently a constant tax rate, is assumed.
Because the face value and the coupon rate are fixed constant, the rate of tax benefits also stays constant over time until bankruptcy.
As is clear from the fact that the tax exemption is not necessarily effective for example when the coupon payments exceed the profits, 
this simplification overestimates the  tax benefits and result in  optimal leverage that tends to be excessively higher than  what is observed in reality (see, e.g., \cite{Berens_1995}).

In order to avoid this, tax cutoff is often applied.  Namely,  the tax rate is assumed to be a step function that is a full corporate tax rate when the asset value (as a proxy to the taxable income value) is above a pre-determined level and it is zero otherwise.  For spatially-homogeneous processes such as Brownian motion and \lev processes, this can be handled relatively easily and can potentially improve the solution if the cutoff level is chosen appropriately.  
However, this is what \cite{ Hilberink_Rogers_2002} calls ``an idealisation"  and even a rough estimate of these cutoff levels is hard to compute.

While the tax structure is an integral factor in determining the capital structure, the estimation of the tax function, or the tax amount with respect to the taxable income, is already very difficult.  Roughly speaking, income is taxable but loss is not.  This implies
a piecewise-linear convex tax function with zero values on the negative part and a straight line on the positive part.  However, there are many factors that deform its shape.  For example, for a firm facing tax progressivity, its effective marginal tax rate (or the derivative of its tax function) is monotonically increasing as opposed to being a constant.  Graham and Smith \cite{Graham_Smith_1999}, among others, examined the effects of statutory progressivity, net operation loss carrybacks/forwards, investment tax credits, the alternative minimum tax and uncertainty in taxable income.  After these are taken into consideration, the kink at zero of the (piecewise-linear) tax function is smoothed out and results in another convex function that is strictly convex for small taxable income while it gets closer to linear as the taxable income increases (see Figures 1 and 2 of \cite{Graham_Smith_1999}).  This implies that the marginal effect of taxable income reduction (or the slope of the tax function) is an increasing function that converges to the full tax rate at infinity.   The effect of this convexity in the determination of the capital structure has been examined by, for example, \cite{Sarkar_2008}. It is shown in their framework that the effects of tax convexity brings the optimal bankruptcy boundary higher and reduces the optimal leverage.

Although there are a number of papers rectifying the overestimation of the tax benefits, most of them employ the step-function tax rate (or piecewise linear tax function) similarly to the cutoff approach described above.   However, this does not fully reflect the strict convexity for small taxable income. The determination of tax benefits is heavily affected especially for firms  with small and/or volatile taxable income. Furthermore, the tax-code varies across countries and across  industries.  For these reasons, we need a more flexible formulation of tax benefits that captures the dynamics and the tax function convexity.

\vspace{0.23cm}



In this paper, we resolve these inflexibilities of the existing models by expressing the values of bankruptcy costs and tax benefits as functions of the firm's asset value.  This generalization encompasses the existing models and adds flexibility in describing a more realistic capital structure.  We focus on the spectrally negative \lev model considered by \cite{ Hilberink_Rogers_2002, Kyprianou_Surya_2007} where the firm's asset value is driven by a general \lev process with only negative jumps.   We obtain a sufficient condition for optimality of a bankruptcy strategy and show its optimality for example when, monotonically in the asset value, the value of tax benefits is increasing, the loss amount at bankruptcy is increasing, and its proportion relative to the asset value is decreasing.  This condition is consistent with the empirical studies reviewed above on the bankruptcy costs and tax benefits.

Our main challenge is to obtain the endogenous bankruptcy strategy by the equity holder.  While it is fairly well-known that the continuous and/or smooth fit can be applied for the exponential \lev model as in \cite{Chen_Kou_2009, Hilberink_Rogers_2002, Kyprianou_Surya_2007}, their results rely particularly on the form of the bankruptcy cost as an exponential function of the \lev process; once the function is generalized, it is no longer clear if the same principle is directly applicable.  However, thanks to the recent advances in the spectrally negative \lev process and its fluctuation theories (e.g.\ \cite{Bertoin_1996, Kyprianou_2006}), the equity value to be maximized can then be written in terms of the scale function and, as we shall see, the optimal bankruptcy level can be derived in a general setting.

The spectrally negative \lev model has been drawing much attention recently in mathematical finance and insurance as generalizations of the classical Black-Scholes  and  Cram\'er-Lundberg models.  A  number of authors have succeeded in extending the classical results to the spectrally negative \lev model by way of scale functions.  We refer the reader to \cite{Baurdoux2008,Baurdoux2009}  for stochastic games,  \cite{Avram_et_al_2007, Bayraktar_2012, Bayraktar_2013, Kyprianou_Palmowski_2007, Loeffen_2008}  for the optimal dividend problem, \cite{alili-kyp, avram-et-al-2004} for American and Russian options,  and \cite{Leung_Yamazaki_2011, Egami-Yamazaki-2010-1,  Kyprianou_Surya_2007, Leung_Yamazaki_2010} for credit risk.   In particular, Egami and Yamazaki \cite{Egami-Yamazaki-2011} considered a general optimal stopping problem for spectrally negative \lev processes and obtained the first-order condition for maximization over threshold strategies; the results are also confirmed numerically by \cite{Yamazaki_2012} in a multiple stopping setting.  Using their results, we show under suitable assumptions that the derivative of  the equity value with respect to the bankruptcy level is monotone, and hence the optimal bankruptcy level can be obtained.

%

The solutions to our generalized model admit semi-explicit expressions written in terms of the scale function, which has analytical forms in certain cases \cite{Egami-Yamazaki-2011, Hubalek_Kyprianou_2009, Kyprianou_2008, Kyprianou_Surya_2007} and can be approximated generally using, e.g., \cite{Egami_Yamazaki_2010_2,Surya_2008}.  In order to illustrate the implementation side, we give an example based on a mixture of Brownian motion and a compound Poisson process with i.i.d.\ exponential jumps.  We compute the optimal bankruptcy levels and the corresponding equity/debt/firm values as well as the optimal leverage ratios as solutions to the \emph{two-stage problem} considered in \cite{Chen_Kou_2009,Leland_1994, Leland_Toft_1996}.  We conduct a sequence of numerical experiments to analyze the impacts of the scale effects on the bankruptcy strategy and the capital structure.


The rest of the paper is organized as follows.  In Section \ref{section_problem},  we review the existing \lev model and introduce our generalized model.   In Section \ref{section_solution}, we focus on the spectrally negative \lev model and obtain a sufficient condition for optimality.
Section \ref{section_example} shows examples that satisfy the sufficient condition.  We give numerical results in Section \ref{section_numerics}.  All proofs are deferred to the Appendix. Throughout the paper, unless otherwise stated, monotonicity is understood in the weak sense; ``increasing" and ``decreasing" are equivalent to ``nondecreasing" and ``nonincreasing", respectively.

%

%
%
%


\section{Problem Formulation} \label{section_problem}
In this section, we first review the existing model and then generalize it.  In particular, we adopt the formulation and notations by \cite{Hilberink_Rogers_2002, Kyprianou_Surya_2007} to a maximum extent. The formulation addressed here holds for any \lev model; in the next section, we focus on the spectrally negative \lev process and derive optimal solutions.

\subsection{The optimal capital structure  model by \cite{Hilberink_Rogers_2002, Kyprianou_Surya_2007}} \label{subsection_model_classic} 
Let $(\Omega, \F, \p)$ be a complete probability space hosting a  L\'{e}vy process $X=\{X_t;\, t\ge 0\}$.  The value of the \emph{firm's asset} is assumed to evolve according to an \emph{exponential \lev process} $V_t := e^{X_t}$, $t\geq0$.   Let $r > 0$ be the positive risk-free interest rate and $0 \leq \delta < r$ the total payout rate to the firm's investors.  
We assume that the market is arbitrage-free and that $\p$ is a risk neutral probability.  This requires $\{ e^{-(r-\delta) t} V_t ; t \geq 0 \}$ to be a $\p$-martingale.  We denote by  $\p_x$ the probability law and $\E_x$ the expectation under which $X_0=x$ (or equivalently $V_0 = e^x$).  

The firm is partly financed by debt with a constant debt profile; it issues new debt at a constant rate $p$ with maturity profile $\varphi(s) := m e^{-ms}$ for some given constants $p, m > 0$.  Namely,  in the time interval $(t, t+\diff t)$, it issues debt with face value $p \varphi(s) \diff t \diff s$ that matures in the time interval $(t+s, t+s+\diff s)$.
By this assumption, at time $0$, the face value of debt that matures in $(s, s+ \diff s)$ becomes
\begin{align}
\left[  \int_{-\infty}^0 p \varphi(s-u) \diff u \right] \diff s = p e^{-m s} \diff s, \label{mat_profile}
\end{align}
and the face value of all debt is a constant value,
\begin{align*}
P := \int_0^\infty p e^{-m s} \diff s = \frac p m.
\end{align*}

Suppose the bankruptcy is triggered at the first time $X$ goes below a given level $B \in \R$, or
\begin{align}
\tau_B^-  := \inf \left\{ t \geq 0: X_t \leq B \right\}, \quad B \in \R. \label{default_time}
\end{align}
The debt pays a constant coupon flow at a fixed rate $\hat{\rho} > 0$ and a constant fraction $0 \leq \hat{\eta} \leq 1$ of the asset value is lost at the bankruptcy time $\tau_B^-$;  the value of the debt with a unit face value  and maturity $t>0$ becomes 
\begin{align}
d (x; B, t) := \E_x \Big[ \int_0^{t \wedge \tau_B^- } e^{-rs} \hat{\rho} \diff s \Big] + \E_x \left[ e^{-rt} 1_{\{t < \tau_B^- \}}\right] + \frac 1 P \E_x \Big[ e^{-r\tau_B^- + X_{\tau_B^-} } \left(1-\hat{\eta}  \right) 1_{\{\tau_B^- < t \}}\Big]. \label{debt_constant_unit}
\end{align}
Here, the first term is the total value of the coupon payments accumulated until maturity or bankruptcy whichever comes first.  The second term is the value of the principle payment.  The last term corresponds to the $1/P$ fraction of the asset value that is distributed, in the event of bankruptcy, to the bondholder of a unit face value.
The \emph{total value of debt} becomes, by \eqref{mat_profile} and Fubini's theorem,
\begin{align*}
\mathcal{D} (x; B) &:= \int_0^\infty p e^{-m t} d (x; B, t)  \diff t \\
&= \E_x\Big[ \int_0^{\tau_B^-} e^{-(r+m)t} \left( P \hat{\rho}+ p \right) \diff t\Big] +  \E_x \Big[ e^{-(r+m) \tau_B^- + X_{\tau_B^-} } \left(1-\hat{\eta}  \right) 1_{\{\tau_B^- < \infty \}}\Big].
\end{align*}

Regarding the \emph{(market) value of the firm}, it is assumed that there is a corporate tax rate $\hat{\gamma}\in [0,1]$ and its (full) rebate on coupon payments is gained if and only if $V_t \geq v_T$ (or $X_t \geq \log v_T$) for some given cutoff level $v_T > 0$.  Based on the Modigliani-Miller theorem (see, e.g., \cite{Modilliani_Miller_1958, Modilliani_Miller_1963}), the firm value becomes
\begin{align*}
\mathcal{V}(x;B) &:= e^x + \E_x \Big[ \int_0^{\tau_B^-} e^{-rt} 1_{\{X_t \geq \log v_T \}} P \hat{\gamma} \hat{\rho} \diff t\Big] -  \hat{\eta}  \E_x \Big[ e^{-r \tau^-_B + X_{\tau_B^-}} 1_{\{\tau_B^- < \infty \}}   \Big],
\end{align*}
where each term corresponds to the current (unlevered) asset value, the total value of tax benefits and the value of loss at bankruptcy, respectively.

The problem is to pursue an \emph{optimal bankruptcy level} $B \in \R$ that maximizes the \emph{equity value},
\begin{align}
\mathcal{E}(x;B) :=\mathcal{V}(x;B) - \mathcal{D} (x; B), \quad x > B, \label{equity}
\end{align}
 subject to the \emph{limited liability constraint},
\begin{align}
\mathcal{E}(x;B) \geq 0, \quad x \geq B, \label{constraint}
\end{align}
if such a level exists.  This \lev model was first solved by Hilberink and Rogers \cite{ Hilberink_Rogers_2002} for a special class of  \lev processes taking the form of an independent sum of a linear Brownian motion and a compound Poisson process with negative jumps (cf.\ (3.21) on page 245 of \cite{ Hilberink_Rogers_2002}).  Kyprianou and Surya \cite{Kyprianou_Surya_2007} later showed for a general spectrally negative process that the optimal bankruptcy level exists and is explicitly determined by applying continuous and smooth fit when $X$ is of bounded and unbounded variation, respectively.  Regarding the cases with both positive and negative jumps, Chen and Kou \cite{Chen_Kou_2009} and Dao and Jeanblanc \cite{Dao_Jeanblanc_2012}
solved for double exponential jump diffusion and Le Courtois and Quittard-Pinon \cite{Courtois_2008} solved for stable processes.

\subsection{Our generalization}

We now incorporate scale effects and generalize the model described above by allowing the loss fraction $\hat{\eta}$ and tax rebate rate $\hat{\gamma}$ dependent on $X$.   

First, we generalize the debt value \eqref{debt_constant_unit} to
\begin{align*}
d (x; B, t) := \E_x \Big[ \int_0^{t \wedge \tau_B^- } e^{-rs} \hat{\rho} \diff s \Big] + \E_x \Big[ e^{-rt} 1_{\{t < \tau_B^- \}}\Big] + \frac 1 P \E_x \Big[ e^{-r\tau_B^- + X_{\tau_B^-} } \left(1-\overline{\eta} \big(X_{\tau_B^-} \big) \right) 1_{\{\tau_B^- < t \}}\Big]
\end{align*}
where $\overline{\eta} (\cdot) \geq 0$ is the rate of loss at bankruptcy.  The total debt value becomes
\begin{align*}
\mathcal{D} (x; B) := \E_x\Big[ \int_0^{\tau_B^-} e^{-(r+m)t} \left( P \hat{\rho}+ p \right) \diff t\Big] +  \E_x \Big[ e^{-(r+m) \tau_B^- + X_{\tau_B^-} } \left(1-\overline{\eta} \big(X_{\tau_B^-} \big) \right) 1_{\{\tau_B^- < \infty \}} \Big].
\end{align*}
By setting  
\begin{align*}
\eta(y) := e^y \overline{\eta}(y), \quad y \in \R,
\end{align*}
we can write
\begin{multline}
\mathcal{D} (x; B) = \E_x\Big[ \int_0^{\tau_B^-} e^{-(r+m)t} \left( P \hat{\rho}+ p \right)\diff t\Big] \\ +  \E_x \Big[ e^{-(r+m) \tau_B^- + X_{\tau_B^-}} 1_{\{\tau_B^- < \infty \}} \Big] - \E_x \left[ e^{-(r+m) \tau_B^-}\eta \big(X_{\tau_B^-}\big) 1_{\{\tau_B^- < \infty \}} \right]. \label{debt_total}
\end{multline}
Here notice that $\eta(\cdot)$ denotes the total loss amount whereas $\overline{\eta}(\cdot)$ is the rate of loss relative to the asset value.  We allow $\overline{\eta}$ to be larger than $1$, which lets one to model, for example, the case $\eta$ is a constant;  see Section \ref{subsection_other_example} below.

We next generalize the tax rebates; the firm value with bankruptcy level $B$ is
\begin{align}
\mathcal{V}(x;B) &:= e^x + \E_x \Big[ \int_0^{\tau_B^-} e^{-rt} f(X_t) \diff t\Big] -  \E_x \left[ e^{-r \tau^-_B}  \eta \big(   X_{\tau_B^-} \big)  1_{\{\tau_B^- < \infty \}} \right], \label{value_total}
\end{align}
where $f(\cdot) \in [0, P \hat{\rho}]$ is the rate of tax rebates.  As we discuss in Remark \ref{remark_finiteness}(1) below, under Assumption \ref{assump_eta}, each expectation in \eqref{debt_total}-\eqref{value_total} is finite for all $x > B$ and hence the equity value, $\mathcal{E}(x;B) = \mathcal{V}(x;B) - \mathcal{D}(x;B)$, is well-defined.





\section{Solutions for the Spectrally negative \lev models} \label{section_solution}
In this section, we study the problem \eqref{equity}-\eqref{constraint} with the debt and firm values generalized to \eqref{debt_total}-\eqref{value_total} focusing on the case $X$ is a spectrally negative \lev process, or a \lev process with only negative jumps. We assume that $X$ is uniquely defined by  the \emph{Laplace exponent},
\begin{align}
\kappa(s) := \log \E_0 \left[ e^{s X_1} \right] =c s +\frac{1}{2}\sigma^2 s^2 +\int_{(
0,\infty)}(e^{-s x}-1+s x 1_{\{0 <x<1\}})\,\Pi(\diff x), \quad  {s \geq 0},  \label{laplace_spectrally_negative}
\end{align}
where $c \in\R$, $\sigma \geq 0$, and $\Pi$ is  a measure  on $(0,\infty)$ such that $\int_{(0,\infty)} (1  \wedge x^2) \Pi( \diff x)<\infty$.
We ignore the case $X$ is the negative of a subordinator (monotonically decreasing a.s.). 

The process $X$ has paths of bounded variation if and only if $\sigma = 0$ and $\int_{(0,\infty)} (1 \wedge x) \Pi(\diff x) < \infty$.  For more details about the spectrally negative \lev process, we refer the reader to, e.g.,\ \cite{Bertoin_1996, Kyprianou_2006}.


\subsection{Scale functions} \label{subsec:scale_functions}For our derivation of optimal solutions, we first rewrite \eqref{debt_total}-\eqref{value_total} using the scale function.
For a given spectrally negative L\'evy process with
Laplace exponent $\kappa$, there exists an increasing
function 
%
\begin{align*}
W^{(q)}: \R \rightarrow [0,\infty), \quad q\ge 0,
\end{align*}
such that $W^{(q)}(x)=0$ for all $x<0$ and
\begin{align*}
\int_0^\infty e^{-s x} W^{(q)}(x) \diff x = \frac 1
{\kappa(s)-q}, \qquad s > \Phi(q)
\end{align*}
where
\begin{align*}
\Phi(q) := \sup \left\{ s > 0: \kappa(s) = q \right\}, \quad q \geq 0.
\end{align*}
Here $\Phi(q)$ is strictly increasing in $q$ because $\kappa$ is  zero at the origin and strictly convex on $[0,\infty)$.

If $\tau_a^+$ is the first time the process goes above $a
> x
> 0$ and $\tau_0^-$ is the first time it goes below zero as a special
case of \eqref{default_time}, then we have
\begin{align*}
\E_x \Big[ e^{-q \tau_a^+} 1_{\left\{ \tau_a^+ < \tau_0^-, \, \tau_a^+ < \infty
\right\}}\Big] = \frac {W^{(q)}(x)}  {W^{(q)}(a)}  \quad \textrm{and} \quad \E_x \Big[ e^{-q  \tau_0^-} 1_{\left\{ \tau_a^+ >
 \tau_0^-, \, \tau_0^- < \infty \right\}}\Big] &= Z^{(q)}(x) - Z^{(q)}(a) \frac {W^{(q)}(x)}
{W^{(q)}(a)},
\end{align*}
where
\begin{align*}
Z^{(q)}(x) := 1 + q \overline{W}^{(q)}(x) \quad \textrm{with} \quad \overline{W}^{(q)}(x) := \int_0^x W^{(q)}(y) \diff y, \quad x \in \R.
\end{align*}

 In particular, $W^{(q)}$ is continuously differentiable on $(0,\infty)$ if $\Pi$ does not have atoms (see \cite{Lambert_2000}), and it is twice-differentiable on $(0,\infty)$ if $\sigma > 0$ (see \cite{Chan_2009}).  For the rest of this paper, we assume the former.
\begin{assump}
We assume that $\Pi$ does not have atoms.
\end{assump}

Fix $q > 0$.  The scale function increases exponentially; 
\begin{align}
W^{(q)} (x) \sim \frac {e^{\Phi(q) x}} {\kappa'(\Phi(q))} \quad
\textrm{as } \; x \rightarrow \infty.
\label{scale_function_asymptotic}
\end{align}

As in Lemmas 4.3-4.4 of  \cite{Kyprianou_Surya_2007}, for all $q > 0$,
\begin{align}
\begin{split}
W^{(q)} (0) &= \left\{ \begin{array}{ll} 0, & \textrm{if $X$ is of unbounded
variation} \\ \frac 1 {\mu}, & \textrm{if $X$ is of bounded variation}
\end{array} \right\}, \\ W^{(q)'} (0+) &=
\left\{ \begin{array}{ll}  \frac 2 {\sigma^2}, &\textrm{if } \sigma > 0 \\
\infty, & \textrm{if } \sigma = 0 \; \textrm{and} \; \Pi(0,\infty) = \infty \\
\frac {q + \Pi(0,\infty)} {\mu^2}, & \textrm{otherwise}
\end{array} \right\}, \end{split} \label{at_zero}
\end{align}
where $\mu := c + \int_{(0,1)} x \Pi(\diff x)$ that is finite for the case $X$ is of bounded variation.

\subsection{In terms of the scale function}
We  now rewrite \eqref{debt_total}-\eqref{value_total} using the scale function.
Toward this end, we introduce the following shorthand notations:
\begin{align*}
\Lambda^{(q)}(x; B) &:= \E_x  \big[ e^{-q \tau_B^-} \eta (X_{\tau_B^-} )1_{\{\tau_B^- < \infty \}} \big], \\
 \mathcal{M}^{(q)}(x;B) &:= \E_x \Big[ \int_0^{\tau_B^-} e^{-qt} f (X_t) \diff t\Big] \quad \textrm{and} \quad
 \mathcal{N}^{(q)}(x;B) := \E_x \Big[ \int_0^{\tau_B^-} e^{-qt}  \left( P \hat{\rho}+ p \right) \diff t\Big],
\end{align*}
for any $q > 0$ and $x > B$.  These functions admit semi-explicit expressions in terms of the scale function.  By Lemmas 2.1-2.3 of Egami and Yamazaki \cite{Egami-Yamazaki-2011}, for all $x > B$ and $q > 0$, we can write
\begin{align}
\begin{split}
\Lambda^{(q)}(x; B) &=   \eta(B) \Big[ Z^{(q)}(x-B) - \frac q {\Phi(q)} W^{(q)}(x-B) \Big]  - W^{(q)}(x-B) H^{(q)}(B)\\ 
&\qquad + \int_{(0,\infty)} \Pi(\diff u) \int_0^{u \wedge (x-B)} W^{(q)} (x-z-B) [\eta(B)-\eta(z+B-u)] \diff z,\\
\mathcal{M}^{(q)}(x;B)  &=
W^{(q)} (x-B) G^{(q)}(B) -
\int_B^x W^{(q)} (x-y) f(y) \diff y, \\
\mathcal{N}^{(q)}(x;B) &= \left( P \hat{\rho}+ p \right) \Big( \frac 1 {\Phi(q)} W^{(q)}(x-B) - \overline{W}^{(q)}(x-B) \Big),
\end{split} \label{deft_Lambda_M}
\end{align}
where
\begin{align*}
H^{(q)}(B) &:= \int_{(0,\infty)}\Pi (\diff u)  \int_0^{u} e^{-\Phi(q) z} [\eta(B) - \eta(B-u+z)] \diff z, \\
G^{(q)}(B) &:= \int_0^\infty e^{-\Phi(q) y} f(y+B)\diff y.
\end{align*}

On the other hand, by Lemma 4.7 of \cite{Kyprianou_Surya_2007}, we have $\E_y \big[ e^{-q \tau_0^- + X_{\tau_0^-}} 1_{\{\tau_0^- < \infty \}} \big] = e^y -  \Gamma^{(q)}(y)$ 
where
\begin{align}
 \Gamma^{(q)}(y) := \frac {\kappa(1)-q} {1 - \Phi(q)} W^{(q)}(y) + (\kappa(1)-q) e^y \int_0^y e^{-z}W^{(q)} (z) \diff z, \quad q > 0 \; \textrm{and} \; y > 0. \label{def_Gamma}
\end{align}
Hence, for any $x > B$,
\begin{align*}
\E_x \Big[ e^{-(r+m) \tau_B^- + X_{\tau_B^-}} 1_{\{\tau_B^- < \infty \}} \Big] = e^B \E_{x-B} \Big[ e^{-(r+m) \tau_0^- + X_{\tau_0^-}} 1_{\{\tau_0^- < \infty \}} \Big] = e^x - e^{B}\Gamma^{(r+m)}(x-B).
\end{align*}

Putting altogether, for all $x > B$, we simplify \eqref{debt_total}-\eqref{value_total} to
\begin{align*}
\mathcal{D} (x; B) &=    e^x - e^{B}\Gamma^{(r+m)}(x-B) + \mathcal{N}^{(r+m)}(x;B)- \Lambda^{(r+m)}(x;B) , \\
\mathcal{V}(x;B) &= e^x  +  \mathcal{M}^{(r)}(x;B)- \Lambda^{(r)}(x;B),
\end{align*}
and we obtain the equity value
\begin{align}
\mathcal{E} (x; B) 
&= e^{B}\Gamma^{(r+m)}(x-B) + ( \mathcal{M}^{(r)}(x;B)  - \Lambda^{(r)}(x;B) ) - ( \mathcal{N}^{(r+m)}(x;B)   - \Lambda^{(r+m)}(x;B) ). \label{equity_in_terms_of_scale}
\end{align}

%

In view of \eqref{deft_Lambda_M}, because $f$ is bounded,  $\mathcal{M}^{(r)}(x;B)$ is finite for all $x > B$.
Regarding $\eta$, we assume the following for the rest of this section.
\begin{assump} \label{assump_eta}
We assume that $\eta$ is $C^2 (\R)$ and is bounded on $(-\infty,M]$ for any fixed $M \in \R$.  
\end{assump}
Here the $C^2$-assumption is imposed for simplicity of the arguments; this can be relaxed as discussed in Remark \ref{remark_twice_diff} below.  

\begin{remark} \label{remark_finiteness}
\begin{enumerate} \item  By Assumption \ref{assump_eta}, the equity value  $\mathcal{E}(x;B)$ is well-defined for any $x > B$.  \item
Because $\eta$ is continuous, $\Lambda^{(q)}(x; B)$, $\mathcal{N}^{(q)}(x;B)$ and $\mathcal{M}^{(q)}(x;B)$ are continuous in $B$ on $(-\infty, x]$ for any fixed $x \in \R$.
\end{enumerate}
\end{remark}

\subsection{Derivative with respect to $B$}
To derive the candidate bankruptcy level, we use the results in Egami and Yamazaki \cite{Egami-Yamazaki-2011} and obtain the derivative of $\mathcal{E} (x; B)$ with respect to $B$. Define
\begin{align}
\Theta^{(q)}(x) := W^{(q)'}(x) - \Phi(q) W^{(q)}(x), \quad x > 0 \; \textrm{and} \; q > 0, \label{def_Theta}
\end{align}
which is known to be always positive.   Given the descending ladder process $\{(\widehat{L}_t^{-1}, \widehat{H}_t): t \geq 0\}$  (see Section 6.2 of \cite{Kyprianou_2006}) and the dirac measure $\delta_0$ at $0$, we have $\int_0^\infty \E_0 \big[ e^{-q \widehat{L}_t^{-1}} 1_{\{\widehat{H}_t \in \diff x\}}\big] \diff t = \Theta^{(q)}(x) \diff x + W^{(q)} (0) \delta_0 (\diff x)$,
which can be confirmed by taking its Laplace transform and the Wiener-Hopf factorization.
In view of this, $\Theta^{(q)}(x)$ is monotonically decreasing in $q$ for every fixed $x > 0$.

The derivatives of $\Lambda^{(q)}(x;B)$, $\mathcal{M}^{(q)}(x;B)$ and  $\mathcal{N}^{(q)}(x;B)$ with respect to $B$ require technical details.  However, as shown by \cite{Egami-Yamazaki-2011}, each term can be expressed as a product of $\Theta^{(q)}(x-B)$ and some function of $B$ (that is independent of $x$).  For the proof of the following, see the proof of Proposition 3.1 of \cite{Egami-Yamazaki-2011}.
\begin{lemma}
 \label{lemma_M_derivative_B}
For every $x > B$  and $q > 0$, we have
\begin{align*}
\frac \partial {\partial B} \Lambda^{(q)}(x; B) &= \Theta^{(q)}(x-B)  \Big[ \frac q {\Phi(q)} \eta(B) + H^{(q)}(B) + \frac {\sigma^2} 2 \eta'(B) \Big], \\
\frac \partial {\partial B} \mathcal{M}^{(q)}(x; B) &= -\Theta^{(q)}(x-B)  G^{(q)}(B),  \quad \textrm{and} \quad
\frac \partial {\partial B} \mathcal{N}^{(q)}(x; B) = -\Theta^{(q)}(x-B)  \frac {P \hat{\rho} + p} {\Phi(r+m)}.
\end{align*}
\end{lemma}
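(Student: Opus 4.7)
My plan is to differentiate the closed-form expressions in \eqref{deft_Lambda_M} directly with respect to $B$ using Leibniz's rule, tracking each summand separately. This is the computation carried out in the proof of Proposition 3.1 of \cite{Egami-Yamazaki-2011}; the strategy is to check each contribution and then verify that everything collapses into a product with $\Theta^{(q)}(x-B)$ times a function of $B$ alone.

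For $\mathcal{M}_i^{(q)}$ I first rewrite $G_i^{(q)}(B)=e^{\Phi(q)B}\int_B^\infty e^{-\Phi(q)z}f_i(z)\diff z$ to obtain $G_i^{(q)\prime}(B)=\Phi(q)G_i^{(q)}(B)-f_i(B)$. Differentiating the two summands of $\mathcal{M}_i^{(q)}(x;B)$ separately, the boundary term $-W^{(q)}(x-B)f_i(B)$ from $\int_B^x W^{(q)}(x-y)f_i(y)\diff y$ cancels the $+W^{(q)}(x-B)f_i(B)$ coming from $G_i^{(q)\prime}$, and what remains regroups into $-[W^{(q)\prime}(x-B)-\Phi(q)W^{(q)}(x-B)]\,G_i^{(q)}(B)=-\Theta^{(q)}(x-B)G_i^{(q)}(B)$. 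This is the straightforward half.

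For $\Lambda^{(q)}$ I split the expression into (i) $\eta(B)[Z^{(q)}(x-B)-(q/\Phi(q))W^{(q)}(x-B)]$, (ii) $-W^{(q)}(x-B)H^{(q)}(B)$, and (iii) the double-integral term. Piece (i) uses the identity $Z^{(q)\prime}=qW^{(q)}$, giving $\frac{\partial}{\partial B}[Z^{(q)}(x-B)-(q/\Phi(q))W^{(q)}(x-B)]=(q/\Phi(q))\Theta^{(q)}(x-B)$, which produces the desired $(q/\Phi(q))\eta(B)\Theta^{(q)}(x-B)$ plus an $\eta'(B)$ residual. Piece (ii) is the product rule, needing also $H^{(q)\prime}(B)=\int_0^\infty\Pi(\diff u)\int_0^u e^{-\Phi(q)z}[\eta'(B)-\eta'(B-u+z)]\diff z$, obtained by differentiating under the integral using Assumption \ref{assump_eta}. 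Piece (iii) is the delicate one: on $\{u>x-B\}$ the upper limit $u\wedge(x-B)$ is $B$-dependent, so Leibniz produces a boundary term proportional to $W^{(q)}(0)[\eta(B)-\eta(x-u)]$ alongside the $W^{(q)\prime}$ and $\eta'$ interior terms.

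The main obstacle is to show that the terms outside the target $\Theta^{(q)}(x-B)\cdot[(q/\Phi(q))\eta(B)+H^{(q)}(B)]$ collapse precisely to $\Theta^{(q)}(x-B)\cdot(\sigma^2/2)\eta'(B)$: the $\eta(B)$ residuals from (i)--(iii) must mutually cancel, while the $\eta'(B)$ residuals must recombine into the single coefficient $\sigma^2/2$. I anticipate using integration by parts in $z$ inside the $\Pi(\diff u)$-integrals to reshape the $\eta'$ contributions, the boundary values \eqref{at_zero} of $W^{(q)}(0)$ and $W^{(q)\prime}(0+)$ to absorb the boundary term from part (iii) (with the bounded-variation and Gaussian cases reconciled separately), and the Laplace-exponent identity \eqref{laplace_spectrally_negative} at $s=\Phi(q)$ (where $\kappa(\Phi(q))=q$) to pin down the coefficient $\sigma^2/2$. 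Since these same identifications are performed once and for all in \cite[Proposition 3.1]{Egami-Yamazaki-2011} for precisely the functionals in \eqref{deft_Lambda_M}, the cleanest execution is to invoke that result directly rather than rerun the bookkeeping.
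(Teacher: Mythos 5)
Your proposal is correct and matches the paper's treatment: the paper gives no independent argument for this lemma and simply refers to the proof of Proposition 3.1 of \cite{Egami-Yamazaki-2011}, which is exactly the result you invoke at the end (and your preliminary Leibniz-rule bookkeeping for $\mathcal{M}_i^{(q)}$, including the identity $G_i^{(q)\prime}(B)=\Phi(q)G_i^{(q)}(B)-f_i(B)$ and the cancellation of the $f_i(B)$ boundary terms, checks out).
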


For the derivative of \eqref{equity_in_terms_of_scale} with respect to $B$, we further obtain the following.
\begin{lemma}  \label{lemma_Gamma_derivative_B}
For every $x > B$, $ \partial (e^B \Gamma^{(r+m)}(x-B)) / {\partial B}
= -\frac {\kappa(1)-(r+m)} {1 - \Phi(r+m)} e^B \Theta^{(r+m)}(x-B)$.
\end{lemma}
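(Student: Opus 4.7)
The plan is to prove Lemma 3.2 by direct calculation from the explicit formula \eqref{def_Gamma} for $\Gamma^{(r+m)}$. Writing $q := r+m$ for brevity and using $e^{B}\cdot e^{x-B} = e^{x}$ (which is independent of $B$), I would first rewrite
\[
e^{B}\,\Gamma^{(q)}(x-B) = \frac{\kappa(1)-q}{1-\Phi(q)}\,e^{B}W^{(q)}(x-B) + (\kappa(1)-q)\,e^{x}\int_{0}^{x-B} e^{-z}W^{(q)}(z)\,\mathrm{d}z,
\]
so that the second summand depends on $B$ only through the upper limit of the integral.

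Next, I would differentiate each summand with respect to $B$. For the first, the product rule gives
\[
\frac{\partial}{\partial B}\Bigl[e^{B}W^{(q)}(x-B)\Bigr] = e^{B}\bigl[W^{(q)}(x-B) - W^{(q)\prime}(x-B)\bigr],
\]
where differentiability of $W^{(q)}$ on $(0,\infty)$ is guaranteed by Assumption 3.1 (no atoms in $\Pi$). For the second, the Leibniz rule yields
\[
\frac{\partial}{\partial B}\int_{0}^{x-B}e^{-z}W^{(q)}(z)\,\mathrm{d}z = -e^{-(x-B)}W^{(q)}(x-B),
\]
and the prefactor $e^{x}$ turns this into $-e^{B}W^{(q)}(x-B)$ after multiplication.

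Summing the two contributions and placing everything over the common denominator $1-\Phi(q)$,
\[
\frac{\partial}{\partial B}\bigl(e^{B}\Gamma^{(q)}(x-B)\bigr) = \frac{(\kappa(1)-q)\,e^{B}}{1-\Phi(q)}\Bigl[\bigl(W^{(q)}(x-B)-W^{(q)\prime}(x-B)\bigr) - \bigl(1-\Phi(q)\bigr)W^{(q)}(x-B)\Bigr].
\]
The bracket collapses to $\Phi(q)W^{(q)}(x-B) - W^{(q)\prime}(x-B) = -\Theta^{(q)}(x-B)$ by \eqref{def_Theta}, which gives precisely the claimed formula after restoring $q = r+m$.

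There is no serious obstacle: the only technical point is confirming that $W^{(q)}$ is differentiable (handled by Assumption 3.1) and being careful with the sign from the Leibniz rule. The algebraic miracle is just the identity $W^{(q)}(x-B) - (1-\Phi(q))W^{(q)}(x-B) = \Phi(q)W^{(q)}(x-B)$, which makes the cancellation of the $W^{(q)}$-terms produce exactly $-\Theta^{(q)}(x-B)$.
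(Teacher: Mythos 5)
Your proposal is correct and follows essentially the same route as the paper: both differentiate the explicit formula \eqref{def_Gamma} for $\Gamma^{(r+m)}$ directly (the paper writes the derivative as $e^B(\Gamma^{(r+m)}(x-B)-\Gamma^{(r+m)\prime}(x-B))$, you equivalently pull out the factor $e^x$ first) and then collapse the resulting bracket to $-\Theta^{(r+m)}(x-B)$ via \eqref{def_Theta}. Your algebra and sign bookkeeping check out, so nothing further is needed.
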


By combining the two lemmas above, we obtain the derivative of $\mathcal{E}(x;B)$ with respect to $B$.  For all $B \in \R$, define
\begin{align}
J^{(r,m)}(B)  &:= \Big( \frac {r+m} {\Phi(r+m)} - \frac r {\Phi(r)} \Big)\eta(B)  - \left( H^{(r)}(B) - H^{(r+m)}(B) \right) \label{def_J}
\end{align}
and
\begin{align}
\begin{split}
K_1^{(r,m)}(B) &:=  
\frac {\kappa(1)-(r+m)} {1 - \Phi(r+m)} e^B - \frac {P \hat{\rho} + p} {\Phi(r+m)} + G^{(r)}(B) - J^{(r,m)}(B), \\
K_2^{(r)}(B) &:= G^{(r)}(B) + \frac r {\Phi(r)} \eta(B) +  H^{(r)}(B) + \frac {\sigma^2} 2 \eta'(B).
\end{split} \label{function_K}
\end{align}

\begin{proposition} \label{prop_derivative_B}
For every $x > B$,
\begin{align}
\frac {\partial} {\partial B}\mathcal{E} (x; B) = - \big[ \Theta^{(r+m)}(x-B) K_1^{(r,m)}(B) + \{\Theta^{(r)}(x-B) - \Theta^{(r+m)}(x-B) \} K_2^{(r)}(B) \big]. \label{derivative_B}
\end{align}

\end{proposition}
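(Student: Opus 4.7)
My plan is to differentiate the expression \eqref{equity_in_terms_of_scale} for $\mathcal{E}(x;B)$ term by term, using Lemma 3.1 and Lemma 3.2, and then collect coefficients of $\Theta^{(r)}(x-B)$ and $\Theta^{(r+m)}(x-B)$. Since the derivative identities already display each $B$-dependent ingredient as a product of such a $\Theta$ factor with something depending only on $B$, the resulting expression will automatically have the shape claimed, and matching it to the right-hand side will be a purely algebraic check.

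Concretely, I would first write
\begin{align*}
\frac{\partial}{\partial B}\mathcal{E}(x;B) &= \frac{\partial}{\partial B}\bigl(e^B\Gamma^{(r+m)}(x-B)\bigr) + \frac{\partial}{\partial B}\mathcal{M}_2^{(r)}(x;B) - \frac{\partial}{\partial B}\Lambda^{(r)}(x;B) \\
&\qquad - \frac{\partial}{\partial B}\mathcal{M}_1^{(r+m)}(x;B) + \frac{\partial}{\partial B}\Lambda^{(r+m)}(x;B),
\end{align*}
and substitute Lemmas 3.1 and 3.2. This immediately yields an expression of the form $A(B)\,\Theta^{(r+m)}(x-B) + C(B)\,\Theta^{(r)}(x-B)$, where $A(B)$ gathers the $\Gamma$-term, the derivative of $\mathcal{M}_1^{(r+m)}$ and the derivative of $\Lambda^{(r+m)}$, while $C(B)$ gathers the derivatives of $\mathcal{M}_2^{(r)}$ and $\Lambda^{(r)}$.

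Next I would carry out the identifications. The coefficient of $\Theta^{(r)}(x-B)$ is
\[
C(B) = -\bigl[G_2^{(r)}(B) + \tfrac{r}{\Phi(r)}\eta(B) + H^{(r)}(B) + \tfrac{\sigma^2}{2}\eta'(B)\bigr] = -K_2^{(r)}(B)
\]
by \eqref{function_K}, so after writing $\Theta^{(r)}(x-B) = \{\Theta^{(r)}(x-B) - \Theta^{(r+m)}(x-B)\} + \Theta^{(r+m)}(x-B)$, the term $-K_2^{(r)}(B)\{\Theta^{(r)}(x-B) - \Theta^{(r+m)}(x-B)\}$ in the claimed formula is accounted for, and a residual contribution $-K_2^{(r)}(B)\,\Theta^{(r+m)}(x-B)$ is absorbed into the $\Theta^{(r+m)}$-coefficient. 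What remains is to verify that this augmented $\Theta^{(r+m)}$-coefficient equals $-K_1^{(r,m)}(B)$; using \eqref{def_J} and \eqref{function_K}, this reduces to checking the identity
\[
A(B) - K_2^{(r)}(B) = -K_1^{(r,m)}(B),
\]
which is a direct rearrangement because the $G_2^{(r)}$, $H^{(r)}$, $\tfrac{r}{\Phi(r)}\eta(B)$ and $\tfrac{\sigma^2}{2}\eta'(B)$ contributions from $K_2^{(r)}(B)$ cancel against terms coming from $J^{(r,m)}(B)$ and from the $\Lambda^{(r)}$-derivative that were (artificially) moved into the $\Theta^{(r+m)}$ coefficient.

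There is no real obstacle here beyond careful bookkeeping: Lemmas 3.1 and 3.2 already supply the nontrivial analytic content, and the statement of Proposition 3.1 is essentially a compact repackaging of the resulting five-term sum. The only subtlety is that $\eta'(B)$ appears only through the $\Lambda^{(r)}$ and $\Lambda^{(r+m)}$ derivatives with opposite signs and identical prefactor $\tfrac{\sigma^2}{2}$; this is what makes the $\eta'$-term land inside $K_2^{(r)}(B)$ (via the $\Theta^{(r)}-\Theta^{(r+m)}$ combination) rather than inside $K_1^{(r,m)}(B)$, which is consistent with the definitions in \eqref{function_K}.
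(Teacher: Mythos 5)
Your proposal is correct and follows essentially the same route as the paper: differentiate \eqref{equity_in_terms_of_scale} term by term via Lemmas \ref{lemma_M_derivative_B}--\ref{lemma_Gamma_derivative_B}, split $\Theta^{(r)}(x-B)$ as $\Theta^{(r+m)}(x-B)+\{\Theta^{(r)}(x-B)-\Theta^{(r+m)}(x-B)\}$, and verify algebraically that the resulting coefficients are $-K_1^{(r,m)}(B)$ and $-K_2^{(r)}(B)$, with the $\tfrac{\sigma^2}{2}\eta'(B)$ terms cancelling in the $\Theta^{(r+m)}$ coefficient exactly as you note.
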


\begin{remark} \label{remark_K_2}If $\eta(B)$ is increasing in $B$, then $K_2^{(r)}$ is uniformly positive. 
\end{remark}

The following remark will be especially useful in our analysis; the proof is given in the Appendix.
\begin{remark} \label{remark_j}
We can also write
\begin{align*}
J^{(r,m)}(B) =  \frac 1 2 \sigma^2( \Phi(r+m) - \Phi(r)) \eta(B) + \int_{(0,\infty)} \Pi (\diff u)   \int_0^{u} (e^{-\Phi(r) z}-e^{-\Phi(r+m) z}) \eta(B-u+z)\diff z.
\end{align*}
Consequently, because $\Phi(q)$ is increasing in $q$ and $\eta(\cdot)$ is positive by assumption, $J^{(r+m)}(B) \geq 0$ for all $B \in \R$.
\end{remark}

\subsection{Continuous fit}  Before discussing the optimality, we consider the continuous fit condition:  
\begin{align*}
\mathcal{E}(B+;B) = 0.
\end{align*}
By taking $x \downarrow 0$ in \eqref{equity_in_terms_of_scale},
\begin{align}
\mathcal{E} (B+; B) 
&= e^{B}\Gamma^{(r+m)}(0) + ( \mathcal{M}^{(r)}(B+;B)  - \Lambda^{(r)}(B+;B) ) - ( \mathcal{N}^{(r+m)}(B+;B)   - \Lambda^{(r+m)}(B+;B) ).
\label{E_zero}
\end{align}
Here we have, by \eqref{def_Gamma}, $\Gamma^{(r+m)}(0) = \frac {\kappa(1)-(r+m)} {1 - \Phi(r+m)} W^{(r+m)}(0)$,
and by Proposition 3.2 of \cite{Egami-Yamazaki-2011}, for both $q=r$ and $q=r+m$,
\begin{align}
\begin{split}
\Lambda^{(q)}(B+; B) &= - W^{(q)} (0)  \Big( \frac q {\Phi(q)} \eta(B) + H^{(q)} (B) \Big) + \eta(B), \\
\mathcal{M}^{(q)} (B+; B) &= W^{(q)} (0) G^{(q)}(B)  \quad \textrm{and} \quad
\mathcal{N}^{(q)} (B+; B) = W^{(q)} (0) \frac {P \hat{\rho} + p} {\Phi(r+m)}.
\end{split} \label{Lambda_M_zero}
\end{align}

Substituting  \eqref{Lambda_M_zero} in \eqref{E_zero} and because $W^{(r)}(0) = W^{(r+m)}(0)$ as in \eqref{at_zero}, we obtain 
\begin{align}
\mathcal{E} (B+; B)  = W^{(r+m)} (0) K_1^{(r,m)}(B). \label{cont_fit_equation}
\end{align}
By \eqref{at_zero}, we conclude that for the bounded variation case the continuous fit condition is equivalent to $K_1^{(r,m)}(B) = 0$, while for the unbounded variation case it always holds no matter how $B$ is chosen.

\begin{remark}  \label{remark_smooth_fit} One can further pursue smooth fit for the case $X$ is of unbounded variation.  However, we do not discuss it here because it is not necessary for the proof of optimality in the subsequent sections.  In fact, it is expected that the smooth fit condition $\mathcal{E}'(B+;B)=0$ is equivalent to $K_1^{(r,m)}(B)=0$, and the optimal solution is expected to satisfy smooth fit (at least when $\sigma > 0$ by the results obtained in \cite{Egami-Yamazaki-2011}).  Our numerical results in Section  \ref{section_numerics} verifies that this is indeed so.
\end{remark}
\subsection{Optimality}
We assume that  there exists $B^* \in [-\infty, \infty]$ such that
\begin{align}
K_1^{(r,m)}(B) &\geq 0 \Longleftrightarrow B \geq B^*,  \label{cond_K_1}\\
K_2^{(r)}(B) &\geq 0, \quad B \geq B^*,\label{cond_K_2}
\end{align}
and prove its optimality.   Note that, if $B^* \in (-\infty, \infty)$, then we must have $K_1^{(r,m)}(B^*)  = 0$.
When the function $K_1^{(r,m)}$ stays constant on some interval, then the solution to $K_1^{(r,m)}(B) = 0$ may not be unique; in such cases, in view of \eqref{cond_K_1}, it is understood that the smallest element is chosen for $B^*$. 
We later discuss sufficient conditions that guarantee \eqref{cond_K_1}-\eqref{cond_K_2}.  Notice here that \eqref{cond_K_2} always holds given that $\eta(B)$ is increasing  in view of Remark \ref{remark_K_2}.  

We first show via continuous fit and \eqref{cond_K_1}-\eqref{cond_K_2} that any feasible bankruptcy level must be at least as large as $B^*$.  Toward this end, we use the following lemma.

\begin{lemma} \label{lemma_Theta_convergence}
When $X$ is of unbounded variation, we have $\Theta^{(r)}(y) - \Theta^{(r+m)}(y) \rightarrow 0$ as $y \downarrow 0$.  
\end{lemma}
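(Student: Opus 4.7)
I would attack the limit by writing $\Theta^{(q)}(y)=W^{(q)\prime}(y)-\Phi(q)W^{(q)}(y)$ and handling each piece separately. Since $X$ is of unbounded variation, \eqref{at_zero} gives $W^{(r)}(0)=W^{(r+m)}(0)=0$, so by continuity of $W^{(q)}$ on $[0,\infty)$ the ``$\Phi(q)W^{(q)}$'' part yields $\Phi(r)W^{(r)}(y)-\Phi(r+m)W^{(r+m)}(y)\to 0$ as $y\downarrow 0$ automatically. The whole content of the lemma is therefore to show
\begin{align*}
W^{(r+m)\prime}(y)-W^{(r)\prime}(y)\longrightarrow 0 \quad\text{as } y\downarrow 0,
\end{align*}
which is nontrivial precisely when $\sigma=0$ (for then both derivatives blow up at $0+$ by \eqref{at_zero}).

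The first step is to exploit the Laplace-transform identity
\begin{align*}
\int_0^{\infty} e^{-sy}W^{(q)}(y)\,\mathrm{d}y=\frac{1}{\kappa(s)-q},\qquad s>\Phi(q),
\end{align*}
to obtain the convolution representation
\begin{align*}
W^{(r+m)}(y)-W^{(r)}(y)=m\int_0^{y}W^{(r)}(y-z)W^{(r+m)}(z)\,\mathrm{d}z,\qquad y\ge 0,
\end{align*}
coming from the algebraic identity $\tfrac{1}{\kappa(s)-r}-\tfrac{1}{\kappa(s)-(r+m)}=\tfrac{-m}{(\kappa(s)-r)(\kappa(s)-(r+m))}$.

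Next I would differentiate this relation on $(0,\infty)$. Since under the no-atom assumption each $W^{(q)}$ is $C^{1}$ on $(0,\infty)$ and since $W^{(r)}(0)=0$ in the unbounded variation case, Leibniz' rule gives
\begin{align*}
W^{(r+m)\prime}(y)-W^{(r)\prime}(y)=m\int_0^{y}W^{(r)\prime}(y-z)W^{(r+m)}(z)\,\mathrm{d}z,
\end{align*}
with the boundary term $W^{(r)}(0)W^{(r+m)}(y)$ vanishing. Then I would bound, using that $W^{(r+m)}$ is increasing and $W^{(r)\prime}\ge 0$,
\begin{align*}
0\le \int_0^{y}W^{(r)\prime}(y-z)W^{(r+m)}(z)\,\mathrm{d}z\le W^{(r+m)}(y)\int_0^{y}W^{(r)\prime}(u)\,\mathrm{d}u=W^{(r+m)}(y)W^{(r)}(y),
\end{align*}
which tends to $0$ as $y\downarrow 0$ because $W^{(r)}(0+)=W^{(r+m)}(0+)=0$. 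Combining the two parts completes the proof.

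\textbf{Main obstacle.} The only subtle point is the validity of differentiating under the integral when $\sigma=0$ and $W^{(q)\prime}(0+)=+\infty$; this is where one must notice that, thanks to $W^{(r+m)}(0)=0$, the weight $W^{(r+m)}(z)$ vanishes precisely where $W^{(r)\prime}(y-z)$ might blow up, and the same argument can be repeated with the roles of $r$ and $r+m$ interchanged by using the symmetric form of the convolution, making the dominating bound trivial. Everything else is bookkeeping.
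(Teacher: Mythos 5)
Your argument is correct in substance, but it proceeds by a genuinely different route than the paper. The paper splits according to $\sigma$: for $\sigma>0$ the claim is immediate from \eqref{at_zero} since $W^{(q)}(0)=0$ and $W^{(q)\prime}(0+)=2/\sigma^2$ does not depend on $q$; for $\sigma=0$ with unbounded variation it works on the transform side, writing $\lim_{x\downarrow 0}[W^{(r)\prime}(x)-W^{(r+m)\prime}(x)]=\lim_{\lambda\uparrow\infty}\bigl[\tfrac{\lambda^2}{\kappa(\lambda)-r}-\tfrac{\lambda^2}{\kappa(\lambda)-(r+m)}\bigr]$ (an initial-value/Abelian step borrowed from Lemma 4.4 of Kyprianou--Surya) and then showing $(\kappa(\lambda)-q)/\lambda\to\infty$ via Fatou and $\int_{(0,1)}x\,\Pi(\diff x)=\infty$. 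You instead invert the same algebraic identity $\tfrac{1}{\kappa-r}-\tfrac{1}{\kappa-(r+m)}=\tfrac{m}{(\kappa-r)(\kappa-(r+m))}$ into the real-space convolution identity $W^{(r+m)}-W^{(r)}=m\,W^{(r)}*W^{(r+m)}$, differentiate, and bound the result by $m\,W^{(r)}(y)W^{(r+m)}(y)\to 0$. This buys a uniform treatment of $\sigma>0$ and $\sigma=0$, avoids the Tauberian interchange of limits, and even yields the quantitative bound $0\le W^{(r+m)\prime}(y)-W^{(r)\prime}(y)\le m\,W^{(r)}(y)W^{(r+m)}(y)$ near $0$; the paper's route, on the other hand, needs no discussion of differentiating a convolution and leans on a lemma already cited elsewhere in the argument.

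One caveat: your stated justification for the Leibniz step is not accurate. In $\int_0^y W^{(r)\prime}(y-z)W^{(r+m)}(z)\,\diff z$ the possible blow-up of $W^{(r)\prime}(y-z)$ occurs at $z=y$, where the weight equals $W^{(r+m)}(y)\neq 0$, so it is not true that ``the weight vanishes precisely where the derivative blows up''; the symmetric form has the same feature. The correct repair is elementary: since $W^{(r)}$ is continuous, nondecreasing and $W^{(r)}(0)=0$ (unbounded variation), $W^{(r)\prime}$ is nonnegative and integrable near $0$ with $\int_0^y W^{(r)\prime}=W^{(r)}(y)$, so by Fubini the function $y\mapsto m\int_0^y W^{(r)\prime}(y-z)W^{(r+m)}(z)\,\diff z$ integrates over any interval to the increments of $W^{(r+m)}-W^{(r)}$ and hence coincides a.e.\ with its derivative; combined with the continuity of $W^{(r+m)\prime}-W^{(r)\prime}$ on $(0,\infty)$ and your bound $m\,W^{(r)}(y)W^{(r+m)}(y)\to 0$, this closes the argument. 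With that fix your proof is complete.
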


\begin{lemma} \label{lemma_above_B_star}
If $B^* \in [-\infty, \infty]$ satisfies \eqref{cond_K_1} and $B$ satisfies \eqref{constraint}, then $B \in [B^*, \infty)$.
\end{lemma}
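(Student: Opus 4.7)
We argue by contrapositive: assume $B < B^*$, so that by \eqref{cond_K_1}, $K_1^{(r,m)}(B) < 0$, and exhibit some $x > B$ at which $\mathcal{E}(x;B) < 0$, violating \eqref{constraint}. The argument splits according to the path variation of $X$.

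\textbf{Case 1 (bounded variation).} By \eqref{at_zero}, $W^{(r+m)}(0) = 1/\mu > 0$, so \eqref{cont_fit_equation} directly gives
\[
\mathcal{E}(B+;B) = W^{(r+m)}(0)\, K_1^{(r,m)}(B) < 0.
\]
Continuity of $x \mapsto \mathcal{E}(x;B)$ on $(B,\infty)$ (cf.\ Remark \ref{remark_eta_continuity}) then yields $\mathcal{E}(x;B) < 0$ for all $x$ slightly larger than $B$, contradicting \eqref{constraint}.

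\textbf{Case 2 (unbounded variation).} Here $W^{(r+m)}(0) = 0$, so \eqref{cont_fit_equation} only gives $\mathcal{E}(B+;B) = 0$, and we must descend to the right-derivative $\mathcal{E}'_x(B+;B)$. Since $\mathcal{E}(B+;B) \equiv 0$ as a function of $B$ in this case, the chain rule yields $\mathcal{E}'_x(B+;B) = -\lim_{x \downarrow B} \partial_B \mathcal{E}(x;B)$. Taking $x \downarrow B$ in \eqref{derivative_B}, Lemma \ref{lemma_Theta_convergence} kills the contribution involving $K_2^{(r)}(B)$ (whose sign is anyway unknown below $B^*$), leaving
\[
\mathcal{E}'_x(B+;B) \;=\; \Theta^{(r+m)}(0+)\, K_1^{(r,m)}(B) \;=\; W^{(r+m)'}(0+)\, K_1^{(r,m)}(B) \;<\; 0,
\]
where the positivity of $W^{(r+m)'}(0+)$ is read from \eqref{at_zero} and \eqref{def_Theta}. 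Combined with $\mathcal{E}(B+;B) = 0$, this strictly negative one-sided derivative forces $\mathcal{E}(x;B) < 0$ for $x$ just above $B$, again contradicting \eqref{constraint}. Note that the resulting identity $\mathcal{E}'_x(B+;B) = W^{(r+m)'}(0+) K_1^{(r,m)}(B)$ is exactly the smooth-fit relation anticipated in Remark \ref{remark_smooth_fit}.

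The delicate step is the chain-rule manipulation in Case 2 when $\sigma = 0$: then $W^{(r+m)'}(0+) = +\infty$ and $\mathcal{E}'_x(B+;B)$ must be interpreted as a possibly infinite right-derivative. A safer route, which avoids this ambiguity, is to work directly with the difference quotient $\mathcal{E}(B+h;B)/h$ for small $h > 0$: differentiating the scale-function representations \eqref{def_Gamma}--\eqref{deft_Lambda_M} under the integral sign (legitimate under Assumption \ref{assump_eta} and the fact that $\Pi$ has no atoms) and using $W^{(q)}(0)=0$, one checks term by term that the limit equals $W^{(r+m)'}(0+) K_1^{(r,m)}(B) \in [-\infty,0)$, which still forces $\mathcal{E}(B+h;B) < 0$ for sufficiently small $h > 0$. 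This uniform handling of the bounded/unbounded variation dichotomy, together with the limit-exchange justification in the $\sigma = 0$ subcase, is the main technical obstacle.
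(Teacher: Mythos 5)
Your bounded-variation case coincides with the paper's argument and is fine. The unbounded-variation case, however, has a genuine gap. The identity you need, $\lim_{h\downarrow 0}\mathcal{E}(B+h;B)/h = W^{(r+m)'}(0+)\,K_1^{(r,m)}(B)$ at an \emph{arbitrary} level $B<B^*$, is a smooth-fit-type statement that is nowhere established in the paper (Remark \ref{remark_smooth_fit} explicitly declines to prove it, precisely because it is not needed), and your two attempts to obtain it do not close it. The chain-rule manipulation presupposes that $(x,B)\mapsto\mathcal{E}(x;B)$ is differentiable in a joint sense at the corner $(B,B)$, that $\partial_x\mathcal{E}(B+;B)$ exists, and that $\partial_B\mathcal{E}(x;B)$ extends continuously to $x=B$; none of this is available from Proposition \ref{prop_derivative_B}, which is proved only for $x>B$ strictly, and when $\sigma=0$ with $\Pi(0,\infty)=\infty$ the putative derivative is infinite, so the classical chain rule does not even apply formally. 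Your ``safer route'' via the difference quotient is only asserted (``one checks term by term''): it would require genuine asymptotics as $h\downarrow 0$ of $\Gamma^{(r+m)}(h)/h$, of the double-integral term in $\Lambda^{(q)}(B+h;B)$, of $(Z^{(q)}(h)-1)/h$, and a strengthening of Lemma \ref{lemma_Theta_convergence} to the averaged form $\bigl(W^{(r)}(h)-W^{(r+m)}(h)\bigr)/h\to 0$, none of which you carry out. As written, the key inequality $\mathcal{E}(B+h;B)<0$ for small $h$ rests on an unproved limit.

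For comparison, the paper avoids any $x$-derivative at the boundary. Fixing $\widehat{B}<B^*$, it uses Proposition \ref{prop_derivative_B} together with Lemma \ref{lemma_Theta_convergence} and the continuity of $K_1^{(r,m)}$ and $K_2^{(r)}$ to show that $\partial_B\mathcal{E}(x;B)$ is bounded below by a strictly positive constant uniformly over the small triangle $\{\widehat{B}\le B\le x\le \widehat{B}+\delta_0\}$ (the $\Theta^{(r+m)}|K_1|$ term dominates because $\Theta^{(r)}-\Theta^{(r+m)}\to 0$ while $\Theta^{(r+m)}$ stays bounded away from zero). Then continuous fit \eqref{cont_fit_equation} at the level $\widehat{B}+\delta_0$ gives $\mathcal{E}((\widehat{B}+\delta_0)+;\widehat{B}+\delta_0)=0$, and monotonicity in $B$ on the triangle forces $\mathcal{E}(\widehat{B}+\delta_0;\widehat{B})<0$, contradicting \eqref{constraint}. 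If you want to salvage your route, you must actually prove the boundary-derivative identity (in effect, smooth fit at suboptimal levels), which is more work than the lemma itself; otherwise, switch to the paper's uniform-positivity argument, which uses only facts already proved for $x>B$.
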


If $B^* = \infty$, then it is the only choice that satisfies the limited liability constraint and hence $B^*=\infty$ is trivially optimal.  
Suppose $B^* \in [-\infty, \infty)$. By how $B^*$ is chosen, \eqref{cond_K_1}-\eqref{cond_K_2} and the positivity of both $\Theta^{(r)}(y) - \Theta^{(r+m)}(y)$ and  $\Theta^{(r)}(y)$ for any $y > 0$,  Proposition \ref{prop_derivative_B} implies
\begin{align}
{\partial} \mathcal{E} (x; B) / {\partial B}\leq 0, \quad B^* < B < x. \label{equity_derivative}
\end{align}
Moreover, $B^*$ satisfies the limited liability constraint \eqref{constraint}.  Indeed, when $B^* =-\infty$, it is trivially satisfied; when $B^* > -\infty$, for any arbitrary $x >B^*$, we have by  \eqref{cont_fit_equation} that $0 \leq W^{(r+m)} (0) K_1^{(r,m)}(x) = \mathcal{E} (x+; x)  \leq \mathcal{E}(x;B^*)$.  
Here the first inequality holds because $K_1^{(r,m)}(x) \geq 0$ for any $x>B^*$ by \eqref{cond_K_1} and in particular it holds by equality for the unbounded variation case; the last inequality holds by  \eqref{equity_derivative}.
This together with the lemma above shows the optimality of $B^*$.  In summary, we have the following.

\begin{theorem} \label{theorem_main}
If there exists $B^* \in [-\infty, \infty]$ such that  \eqref{cond_K_1}-\eqref{cond_K_2} hold, then $B^*$ is the optimal bankruptcy level.
\end{theorem}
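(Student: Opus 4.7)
The argument is essentially laid out in the discussion preceding the theorem, and my plan is to organize it as a two-sided optimality proof. On one side, I restrict the feasible set: any bankruptcy level $B$ satisfying \eqref{constraint} must lie in $[B^*, \infty)$, which is exactly the content of Lemma \ref{lemma_above_B_star}. On the other side, I must show that $B^*$ itself is admissible and that $B \mapsto \mathcal{E}(x;B)$ is maximized there.

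For the monotonicity step, I would invoke Proposition \ref{prop_derivative_B}. For $B \in [B^*, x)$, both $K_1^{(r,m)}(B)$ and $K_2^{(r)}(B)$ are nonnegative by \eqref{cond_K_1}-\eqref{cond_K_2}, while the prefactors $\Theta^{(r+m)}(x-B)$ and $\Theta^{(r)}(x-B) - \Theta^{(r+m)}(x-B)$ are strictly positive, using that $\Theta^{(q)}(y) > 0$ for $y > 0$ and that $\Theta^{(q)}(y)$ is strictly decreasing in $q$, as recalled right after \eqref{def_Theta}. This yields $\frac{\partial}{\partial B} \mathcal{E}(x;B) < 0$ on $[B^*, x)$, so $\mathcal{E}(x;B^*) > \mathcal{E}(x;B)$ for every feasible $B$ with $B^* < B < x$.

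For admissibility of $B^*$, I would use the continuous-fit identity \eqref{cont_fit_equation}, which gives $\mathcal{E}(y+;y) = W^{(r+m)}(0) K_1^{(r,m)}(y)$ for any $y$. Taking $y = B^*$ yields $\mathcal{E}(B^*+;B^*) \geq 0$. For $x > B^*$, setting $y = x$ gives $\mathcal{E}(x+;x) = W^{(r+m)}(0) K_1^{(r,m)}(x) \geq 0$ by \eqref{cond_K_1}, and then the strict monotonicity just established, applied on the interval $[B^*, x)$, produces $\mathcal{E}(x; B^*) > \mathcal{E}(x+; x) \geq 0$. Combined with Lemma \ref{lemma_above_B_star}, this identifies $B^*$ as the optimizer.

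The main technical delicacy is that the argument must work uniformly across the bounded- and unbounded-variation cases. In the unbounded-variation case $W^{(r+m)}(0) = 0$, so the continuous-fit value $\mathcal{E}(x+;x)$ is trivially zero and one genuinely needs the strict monotonicity to pull $\mathcal{E}(x; B^*)$ above it; in the bounded-variation case $W^{(r+m)}(0) > 0$, one instead leans directly on the sign condition \eqref{cond_K_1} at $x > B^*$. The displayed chain $0 \leq W^{(r+m)}(0) K_1^{(r,m)}(x) = \mathcal{E}(x+;x) < \mathcal{E}(x;B^*)$ handles both regimes at once, which is where Lemma \ref{lemma_Theta_convergence} quietly enters to guarantee that no pathology arises as $B \uparrow x$.
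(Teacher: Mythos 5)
Your proposal is correct and follows essentially the same route as the paper: restrict feasible levels to $[B^*,\infty)$ via Lemma \ref{lemma_above_B_star}, use Proposition \ref{prop_derivative_B} with \eqref{cond_K_1}--\eqref{cond_K_2} and the positivity of $\Theta^{(r+m)}$ and $\Theta^{(r)}-\Theta^{(r+m)}$ to get monotonicity of $\mathcal{E}(x;\cdot)$ on $[B^*,x)$, and verify feasibility of $B^*$ through the continuous-fit chain $0 \leq W^{(r+m)}(0)K_1^{(r,m)}(x) = \mathcal{E}(x+;x) < \mathcal{E}(x;B^*)$. One cosmetic remark: Lemma \ref{lemma_Theta_convergence} is used inside the proof of Lemma \ref{lemma_above_B_star} (unbounded-variation case), not in the displayed chain itself, but since you invoke that lemma as a whole this does not affect the argument.
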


\begin{remark} \label{remark_twice_diff}
The assumption that $\eta$ is twice-differentiable on $\R$ as in Assumption \ref{remark_K_2} can be relaxed.  Its twice-differentiability at a fixed $B \in \R$  is required for Lemma \ref{lemma_M_derivative_B}.  In view of the arguments in this section (and in particular Remark \ref{remark_finiteness}-(2)), we only need it to be piecewise $C^2$ as long as it is continuous.
\end{remark}

\begin{remark}
The case $B^* = -\infty$ corresponds to the case it is never optimal to go bankrupt. The corresponding equity value becomes $\mathcal{E}(x; -\infty) := \lim_{B \downarrow -\infty}\mathcal{E}(x; B)$, which exists thanks to \eqref{equity_derivative}. In fact, by Assumption \ref{assump_eta}, the bankruptcy cost vanishes as $B \downarrow -\infty$ and because $\tau_B^- \xrightarrow{B \downarrow -\infty} \infty$ a.s., dominated and monotone convergence yields
\begin{align*}
\mathcal{E}(x; -\infty) = e^x + \E_x \left[ \int_0^\infty e^{-rt} f(X_t) \diff t\right] - \frac {P \hat{\rho} + p} {r+m},
\end{align*}
where the middle term can be written in terms of the scale function as in Corollary 8.9 of \cite{Kyprianou_2006}.
\end{remark}

\section{Sufficient Conditions} \label{section_example}

In the last section, we showed that the conditions \eqref{cond_K_1}-\eqref{cond_K_2} guarantee the optimality of the bankruptcy level $B^*$.  Here we obtain more concrete and economically sound conditions that satisfy \eqref{cond_K_1}-\eqref{cond_K_2}.  We first show that Assumption \ref{assump_optimality} below is sufficient and also encompasses the model by \cite{Hilberink_Rogers_2002, Kyprianou_Surya_2007} as reviewed in Section \ref{subsection_model_classic}.   We then give another example with a constant $\eta$ that does not satisfy Assumption \ref{assump_optimality} but nonetheless guarantees the optimality.  It is emphasized here that the assumptions discussed in this section are sufficient and clearly not necessary; \eqref{cond_K_1}-\eqref{cond_K_2} are expected to hold more generally.

\subsection{A sufficient condition}We show that the following assumption guarantees \eqref{cond_K_1}-\eqref{cond_K_2} and hence the optimality of $B^*$ holds by Theorem \ref{theorem_main}.
\begin{assump} \label{assump_optimality}
 (1) $\eta$ is increasing, (2) $\overline{\eta}$ is decreasing, (3) $f$ is increasing, and  (4) $0 \leq \overline{\eta}(\cdot) \leq 1$.
\end{assump}

Each condition in the assumption is justified by the empirical studies as we described in Section \ref{section_introduction}.  The monotonicity of $\eta$ and $\overline{\eta}$ in (1) and (2)  means that, as the firm's asset value increases, the amount of total bankruptcy costs increases and its proportion relative to the asset value decreases.   This is exactly the same as the results on scale effects derived by \cite{Ang_et_al_1982, Bris_2006, Warner_1976}; see Section \ref{section_introduction} and in particular Figure \ref{figure_bankruptcy}.  The monotonicity of $f$ (or the tax rebate rate) in (3) is implied by the convexity of the tax function that has been empirically confirmed by  \cite{Graham_Smith_1999}; the marginal effect of taxable income reduction (or the slope of the tax function) is an increasing function that converges to the full tax rate at infinity. The last condition (4) requires that the bankruptcy costs should not be more than the total asset value.


%
%

We first note that (1) guarantees \eqref{cond_K_2} by Remark \ref{remark_K_2}.  The following proposition shows that $K_1^{(r,m)}(B)$ is monotonically increasing and hence \eqref{cond_K_1} also holds.

\begin{proposition} \label{prop_based_on_assump}
Suppose Assumption \ref{assump_optimality} holds.  Then (i) there exists $B^* \in [-\infty, \infty)$ such that  \eqref{cond_K_1}-\eqref{cond_K_2} hold, and (ii)  it is an optimal bankruptcy level.
\end{proposition}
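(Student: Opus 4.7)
The plan is to verify the two sufficient conditions \eqref{cond_K_1}--\eqref{cond_K_2} of Theorem \ref{theorem_main} under Assumption \ref{assump_optimality}, so that $B^*$ is optimal. Condition \eqref{cond_K_2} is immediate from Remark \ref{remark_K_2}: Assumption \ref{assump_optimality}(1) gives $\eta'\geq 0$ and $\eta(B)\geq\eta(B-u+z)$ for $z\in[0,u]$, hence $H^{(r)}(B)\geq 0$; together with $G_2^{(r)}(B)\geq 0$ and $\eta(B)\geq 0$, this yields $K_2^{(r)}(B)\geq 0$ for every $B\in\R$.

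For \eqref{cond_K_1}, the strategy is to rewrite $K_1^{(r,m)}$ as a sum of nondecreasing functions of $B$ and then locate its unique root. Substituting $\eta(y)=e^y\overline{\eta}(y)$ into the representation of $J^{(r,m)}$ given in Remark \ref{remark_j} factors out $e^B$, yielding $J^{(r,m)}(B) = e^B\tilde{J}(B)$ where
\[
\tilde{J}(B) := \frac{\sigma^2}{2}\bigl(\Phi(r+m)-\Phi(r)\bigr)\overline{\eta}(B) + \int_0^\infty \Pi(\diff u)\, e^{-u}\int_0^u \bigl(e^{(1-\Phi(r))z}-e^{(1-\Phi(r+m))z}\bigr)\overline{\eta}(B-u+z)\diff z.
\]
Because $\overline{\eta}$ is decreasing (Assumption \ref{assump_optimality}(2)) and the kernel is nonnegative, $\tilde{J}$ is nonincreasing in $B$. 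Setting $C := (\kappa(1)-(r+m))/(1-\Phi(r+m))$, we can then write
\[
K_1^{(r,m)}(B) = e^B\bigl(C-\tilde{J}(B)\bigr) - G_1^{(r+m)}(B) + G_2^{(r)}(B),
\]
whose last two summands are nondecreasing in $B$ by Assumptions \ref{assump_optimality}(3)--(4), since $G_i^{(q)}(B)=\int_0^\infty e^{-\Phi(q)y}f_i(B+y)\diff y$ inherits the monotonicity of $f_i$.

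The main obstacle is showing that the first summand $e^B(C-\tilde{J}(B))$ is nondecreasing, which reduces to the inequality $C-\tilde{J}(B)\geq 0$. By Assumption \ref{assump_optimality}(5) we have $\overline{\eta}\leq 1$, so $\tilde{J}(B)\leq D$, where $D$ is the value obtained from $\tilde{J}$ by replacing $\overline{\eta}$ with $1$. I would compute $D$ explicitly using the L\'evy--Khintchine representation of $\kappa$ evaluated at $s\in\{1,\Phi(r),\Phi(r+m)\}$, together with $\kappa(\Phi(q))=q$ and the martingale identity $\kappa(1)=r-\delta$; after the $\sigma^2$-terms cancel, this produces the clean formula $D=\frac{\delta+m}{\Phi(r+m)-1}-\frac{\delta}{\Phi(r)-1} = C - \frac{\delta}{\Phi(r)-1} \leq C$. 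Hence $\tilde{J}(B)\leq C$, so $e^B(C-\tilde{J}(B))$ is a product of two nonnegative nondecreasing functions and is itself nondecreasing, making $K_1^{(r,m)}$ nondecreasing in $B$ (strictly under mild nondegeneracy such as $\delta>0$).

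To conclude, I would examine the limits of $K_1^{(r,m)}$. As $B\to-\infty$, $e^B\tilde{J}(B)\to 0$ by boundedness of $\tilde{J}$, while $G_1^{(r+m)}(-\infty)>G_2^{(r)}(-\infty)$ under the natural assumption that the limiting coupon rate dominates the limiting tax rebate, so $K_1^{(r,m)}(-\infty)<0$. As $B\to+\infty$, the first summand diverges to $+\infty$ since $C-\tilde{J}(\infty)\geq C-D>0$ (using $\delta>0$ or $\overline{\eta}(\infty)<1$), giving $K_1^{(r,m)}(+\infty)=+\infty$. Continuity (Remark \ref{remark_eta_continuity}) together with monotonicity yields a unique root $B^*$ with $K_1^{(r,m)}(B)\geq 0$ if and only if $B\geq B^*$, which is precisely \eqref{cond_K_1}. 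Theorem \ref{theorem_main} then delivers the optimality of $B^*$.
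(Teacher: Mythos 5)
Your proof is correct and follows essentially the same route as the paper: the same decomposition $K_1^{(r,m)}(B)=e^B l(B)-G_1^{(r+m)}(B)+G_2^{(r)}(B)$ with $l(B)=C-\tilde J(B)$, the same key identity for the constant $j^{(r,m)}=D$ (the paper's Lemma \ref{lemma_small_j}), and the same monotonicity arguments for each summand. The only (immaterial) difference is that you obtain $l\geq 0$ by the pointwise bound $\tilde J(B)\leq D\leq C$ using $\overline{\eta}\leq 1$, whereas the paper lets $B\downarrow-\infty$ and uses monotone convergence to write the limit as a convex combination of two nonnegative quantities; your direct bound is, if anything, slightly cleaner, and your explicit caveat about the sign of $K_1^{(r,m)}$ at $-\infty$ is a point the paper itself leaves implicit.
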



In order to prove Proposition \ref{prop_based_on_assump},  we first rewrite, in view of \eqref{function_K} and Remark \ref{remark_j},
\begin{align*}
K_1^{(r,m)}(B) &=  
e^B  l(B) - \frac {P \hat{\rho} + p} {\Phi(r+m)} + G^{(r)}(B)
\end{align*}
where 
\begin{multline}
l(B) := \frac {\kappa(1)-(r+m)} {1 - \Phi(r+m)} - \frac 1 2 \sigma^2 ( \Phi(r+m) - \Phi(r))\overline{\eta}(B)  \\  - \int_{(0,\infty)} \Pi (\diff u)   \int_0^{u} (e^{-\Phi(r) z}-e^{-\Phi(r+m) z}) e^{z-u}\overline{\eta}(B-u+z)\diff z.
\end{multline}
Because $\overline{\eta}(B)$ is decreasing in $B$ by assumption and $\Phi(q)$ is increasing in $q$,  we have that $l(B)$ is increasing.   Because $\overline{\eta}(B)$ is monotone and bounded in $[0,1]$, there exists $\overline{\eta}(-\infty) := \lim_{B \downarrow -\infty} \overline{\eta}(B) \in [0,1]$.  By the monotone convergence theorem, we obtain
\begin{multline*}
\lim_{B \downarrow -\infty}l(B) = \frac {\kappa(1)-(r+m)} {1 - \Phi(r+m)} - \frac 1 2 \sigma^2  (\Phi(r+m) - \Phi(r)) \overline{\eta}(-\infty)  \\   \qquad - \int_{(0,\infty)} \Pi (\diff u)  \int_0^{u} (e^{-\Phi(r) z}-e^{-\Phi(r+m) z}) e^{z-u}\overline{\eta}(-\infty)\diff z = \frac {\kappa(1)-(r+m)} {1 - \Phi(r+m)} - \overline{\eta}(-\infty)  j^{(r,m)}
\end{multline*}
where
\begin{align*}
j^{(r,m)} &:= \frac 1 2 \sigma^2 ( \Phi(r+m) - \Phi(r)) +  \int_{(0,\infty)} \Pi (\diff u)  e^{-u} \Big(  \frac {1 - e^{-(\Phi(r)-1)u}} {\Phi(r)-1} - \frac {1 - e^{-(\Phi(r+m)-1)u}} {\Phi(r+m)-1} \Big). 
\end{align*}
\begin{lemma} \label{lemma_small_j}
We have $j^{(r,m)}
= \frac {\kappa(1) - (r+m)} {1-\Phi(r+m) } - \frac {\kappa(1)-r} {1-\Phi(r)}$.
\end{lemma}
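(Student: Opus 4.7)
\textbf{Proof plan for Lemma~\ref{lemma_small_j}.} The plan is to expand $\kappa(1)-q=\kappa(1)-\kappa(\Phi(q))$ using the L\'evy--Khintchine representation \eqref{laplace_spectrally_negative}, divide by $1-\Phi(q)$, and then subtract the two resulting expressions (for $q=r$ and $q=r+m$). The $c$--drift and the truncation term $x\mathbf{1}_{\{0<x<1\}}$ will cancel in the difference, leaving exactly the Brownian contribution and an integral that can be massaged into the form of $j^{(r,m)}$ by pulling out a factor of $e^{-x}$.

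Concretely, first I would use $\kappa(\Phi(q))=q$ and the explicit form of $\kappa$ to write
\[
\frac{\kappa(1)-q}{1-\Phi(q)}
=c+\tfrac{1}{2}\sigma^{2}\bigl(1+\Phi(q)\bigr)
+\int_{(0,\infty)}\left[\frac{e^{-x}-e^{-\Phi(q)x}}{1-\Phi(q)}+x\mathbf{1}_{\{0<x<1\}}\right]\Pi(\diff x),
\]
using $(1-\Phi(q)^{2})/(1-\Phi(q))=1+\Phi(q)$. A short Taylor expansion around $x=0$ shows that the bracketed integrand is $O(x^{2})$, so the integral converges under $\int(1\wedge x^{2})\,\Pi(\diff x)<\infty$.

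Next I would take the difference of the above identity at $q=r+m$ and $q=r$. The drift $c$ and the compensator $x\mathbf{1}_{\{0<x<1\}}$ are independent of $q$ and cancel, while the Brownian part contributes $\tfrac{1}{2}\sigma^{2}(\Phi(r+m)-\Phi(r))$, which already matches the Brownian term of $j^{(r,m)}$. It remains to identify the integral against $\Pi$. The key algebraic manipulation is
\[
\frac{e^{-x}-e^{-\Phi(q)x}}{1-\Phi(q)}
=-\,e^{-x}\,\frac{1-e^{-(\Phi(q)-1)x}}{\Phi(q)-1},
\]
which is valid for any $q$ with $\Phi(q)\neq 1$ (the endpoint being handled by continuity). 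Substituting this for both $q=r$ and $q=r+m$ and taking the difference gives precisely the integrand appearing in the definition of $j^{(r,m)}$, and the lemma follows.

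The only real subtlety is integrability: when split into two terms the difference integrand is not separately $\Pi$-integrable near the origin (it is only $O(x)$), so I must not separate the two differences of the form $\tfrac{e^{-x}-e^{-\Phi(q)x}}{1-\Phi(q)}$ as distinct integrals against $\Pi$ until after I have taken the difference in $q$, at which point the leading $O(x)$ terms cancel and what remains is $O(x^{2})$, hence integrable. Beyond that bookkeeping point, the argument is a direct computation using only the definition of $\kappa$ and the defining property $\kappa(\Phi(q))=q$.
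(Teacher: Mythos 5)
Your proof is correct, but it takes a more elementary route than the paper. The paper proves the identity via the exponential change of measure $\left.\frac{\diff \p_1}{\diff \p}\right|_{\F_t}=e^{X_t-\kappa(1)t}$: it introduces the tilted Laplace exponent $\kappa_1(\beta)=\kappa(\beta+1)-\kappa(1)$, uses $\kappa_1(\Phi(q)-1)=q-\kappa(1)$, and reads off $\frac{\kappa(1)-q}{1-\Phi(q)}=\frac{\kappa_1(\Phi(q)-1)}{\Phi(q)-1}$ from the L\'evy triplet of $X$ under $\p_1$ — in particular the tilted L\'evy measure $e^{-u}\,\Pi(\diff u)$ produces the factor $e^{-u}$ in $j^{(r,m)}$ automatically, and subtracting the two resulting expressions gives the lemma. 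You instead expand $\kappa(1)-\kappa(\Phi(q))$ directly from the L\'evy--Khintchine form, divide by $1-\Phi(q)$, and recover the $e^{-u}$ factor through the algebraic identity $\frac{e^{-x}-e^{-\Phi(q)x}}{1-\Phi(q)}=-e^{-x}\frac{1-e^{-(\Phi(q)-1)x}}{\Phi(q)-1}$, which is exactly what the measure change encodes. What your approach buys is self-containedness (no Esscher transform, no external reference), at the cost of having to do the integrability bookkeeping by hand near the origin; you handle this correctly by keeping the compensator $x\mathbf{1}_{\{0<x<1\}}$ attached to each bracketed integrand until the difference in $q$ is taken, after which the $O(x)$ terms cancel and the remaining integrand is $O(x^2)$, matching the single-integral form in which $j^{(r,m)}$ is defined. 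The paper's route hides precisely this bookkeeping inside the well-definedness of $\kappa_1$; otherwise the two computations are equivalent.
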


Now by Lemma  \ref{lemma_small_j}, $\lim_{B \downarrow -\infty}l(B) = (1-\overline{\eta}(-\infty)) \frac {\kappa(1)-(r+m)} {1-\Phi(r+m) } + \overline{\eta}(-\infty) \frac {\kappa(1)-r} {1-\Phi(r)}$.
Because $0 \leq \overline{\eta}(-\infty) \leq 1$ and $\kappa$ is convex on $[0,\infty)$ and zero at the origin,  we have $\lim_{B \downarrow -\infty} l(B) > 0$ and hence $l(B) > 0$ for any $B \in \R$. Consequently, $e^B l(B)$ is strictly increasing in $B$.  Finally, $- \frac {P \hat{\rho} + p} {\Phi(r+m)} + G^{(r)}(B)$ is increasing in $B$ because $f$ is increasing by assumption.  Therefore, there exists a unique $B^*$ that satisfies \eqref{cond_K_1}.  Here $K_1^{(r,m)}(B) \xrightarrow{B \uparrow \infty} \infty$, and hence $B^* < \infty$. As is discussed above, \eqref{cond_K_2} holds by the monotonicity of $\eta$.
Now, by Theorem \ref{theorem_main}, Proposition \ref{prop_based_on_assump} holds.

\begin{remark} \label{remark_vanishing_G}
If $G^{(r)}(B) \xrightarrow{B \downarrow -\infty} 0$ (which holds when the tax rebate function $f(x)$ vanishes as $x \downarrow -\infty$ by dominated convergence), then $K_1^{(r,m)}(B) \xrightarrow{B \downarrow -\infty} -  {(P \hat{\rho} + p)} /{\Phi(r+m)} < 0$ and hence $B^* > -\infty$.
\end{remark}

%

\subsection{Reduction to the case by \cite{Hilberink_Rogers_2002,Kyprianou_Surya_2007}}  

As an example that satisfies Assumption \ref{assump_optimality}, we revisit the simple case by \cite{Hilberink_Rogers_2002,Kyprianou_Surya_2007} as reviewed in Section \ref{subsection_model_classic}.  Namely, we set
\begin{align*}
\quad f(x) = 1_{\{x \geq \log v_T \}} P \hat{\gamma} \hat{\rho}, \quad \textrm{and} \quad \overline{\eta}(B) = \hat{\eta},
\end{align*}
and confirm that our results match those of \cite{Hilberink_Rogers_2002,Kyprianou_Surya_2007}.

First,  Assumption \ref{assump_optimality} is trivially satisfied and hence the optimal threshold level $B^*$ is uniquely given by $K_1^{(r,m)}(B^*) = 0$ (here $B^* > -\infty$ because $f(x) \downarrow 0$ as $x \downarrow -\infty$ by Remark \ref{remark_vanishing_G}).
%
In this case,
\begin{multline*}
G^{(r)}(B) = \int_0^\infty  e^{-\Phi(r) y} f(y+B)   \diff y = P \hat{\gamma} \hat{\rho}   \int_{(\log v_T - B) \vee 0}^\infty  e^{-\Phi(r) y}   \diff y = \frac {P \hat{\gamma} \hat{\rho}} {\Phi(r)} e^{-\Phi(r)((\log v_T - B) \vee 0)}\\ = \frac {P \hat{\gamma} \hat{\rho}} {\Phi(r)} \left( e^{\log v_T - B} \vee 1 \right)^{-\Phi(r)} = \frac {P \hat{\gamma} \hat{\rho}} {\Phi(r)} \left( (e^{-B} v_T) \vee 1 \right)^{-\Phi(r)} = \frac {P \hat{\gamma} \hat{\rho}} {\Phi(r)} \Big( \frac {e^{B}}  {v_T} \wedge 1 \Big)^{\Phi(r)}.
\end{multline*}
We also have, by  Remark \ref{remark_j} and Lemma  \ref{lemma_small_j},
\begin{align*}
J^{(r,m)}(B)  &=  \frac 1 2 \sigma^2 ( \Phi(r+m) - \Phi(r)) e^B \hat{\eta} + \int_{(0,\infty)} \Pi (\diff u)  \int_0^{u} (e^{-\Phi(r) z}-e^{-\Phi(r+m) z}) e^{B-u+z} \hat{\eta} \diff z \\
&=  j^{(r,m)} \hat{\eta} e^B = \Big( \frac {\kappa(1)-(r+m)} {1-\Phi(r+m)} - \frac {\kappa(1)-r} {1-\Phi(r)} \Big) \hat{\eta} e^B.
\end{align*}
Combining the above, 
\begin{align*}
K_1^{(r,m)}(B) &=  
\frac {\kappa(1)-(r+m)} {1 - \Phi(r+m)} e^B - \frac {P (\hat{\rho} + m)} {\Phi(r+m)} + \frac {P \hat{\gamma} \hat{\rho}} {\Phi(r)} \Big( \frac {e^{B}}  {v_T} \wedge 1 \Big)^{\Phi(r)} - \hat{\eta} e^B \Big(  \frac {\kappa(1)-(r+m)} {1-\Phi(r+m)} - \frac {\kappa(1)-r} {1-\Phi(r)} \Big) \\
&=  - \frac {P (\hat{\rho} + m)} {\Phi(r+m)} + \frac {P \hat{\gamma} \hat{\rho}} {\Phi(r)} \Big( \frac {e^{B}}  {v_T} \wedge 1 \Big)^{\Phi(r)} +  e^B \Big( (1-\hat{\eta}) \frac {\kappa(1) - (r+m)} {1-\Phi(r+m)} + \hat{\eta} \frac {\kappa(1)-r} {1-\Phi(r)} \Big).
\end{align*}
The unique value of $B^*$ that satisfies $K_1^{(r,m)}(B^*) = 0$ indeed matches the result of \cite{Hilberink_Rogers_2002,Kyprianou_Surya_2007}.  

\subsection{Other examples}   \label{subsection_other_example}
Although Assumption \ref{assump_optimality} is already a reasonable assumption, it is expected that \eqref{cond_K_1}-\eqref{cond_K_2} hold more generally.  As an example where Assumption \ref{assump_optimality} is violated but the optimality of $B^*$ holds, we consider the case the value of bankruptcy costs is a constant, i.e., $\eta \equiv \eta_0$; see \cite{Leland_1994}.  In this case, we have $\overline{\eta}(y) = \eta_0 e^{-y}$, which violates Assumption \ref{assump_optimality}-(4).  Nonetheless, the optimality  trivially holds upon the monotonicity of $G^{(r)}$.  Indeed, $H^{(r)} \equiv H^{(r+m)} \equiv 0$ and hence, by \eqref{def_J}, $J^{(r,m)}(B)  = \big( \frac {r+m} {\Phi(r+m)} - \frac r {\Phi(r)} \big) \eta_0$,
which is a constant.  Now in view of the definition of $K_1^{(r+m)}$ in \eqref{function_K}, it is clearly increasing in $B$ for example when Assumption \ref{assump_optimality}-(3) holds, and hence \eqref{cond_K_1} is valid.  Moreover, \eqref{cond_K_2} trivially holds by Remark \ref{remark_K_2}.

\section{Numerical Studies} \label{section_numerics}

In this section, we conduct a series of numerical experiments using the results obtained in the previous sections.  
Here we focus on analyzing the impacts of scale effects on the capital structure because this is the main theme of this paper.  For other numerical experiments such as the impacts of the \lev measure and term structure, we refer the reader to a comprehensive study conducted by 
 \cite{Chen_Kou_2009,Dao_Jeanblanc_2012}.  We first illustrate how to compute the optimal bankruptcy level, the associated equity/debt/firm values as well as the optimal face value as a solution to the two-stage problem  \cite{Chen_Kou_2009,Leland_1994, Leland_Toft_1996}. 
We then study computationally the impacts of the scale effects on these values.
We use an example where Assumption \ref{assump_optimality} holds and, for $X$, we follow \cite{Hilberink_Rogers_2002} and use a mixture of a Brownian motion and a compound Poisson process with i.i.d.\ exponential jumps.   


For the bankruptcy costs, let
\begin{align}
\overline{\eta}(x) = \eta_0 \left( 1 \wedge e^{-a(x-b)} \right), \quad x \in \R, \label{eta_bar_numerics}
\end{align}
for some $0 \leq a \leq 1$, $b \in \R$ and $0 \leq \eta_0 \leq 1$.  This is clearly decreasing in $x$ and bounded in $[0,1]$. Moreover, $\eta(x) = \eta_0 \left( e^x \wedge e^{(1-a)x+ab} \right) $ is increasing.   We call $a$ the \emph{degree of bankruptcy cost concavity}.   When $a=0$, the scale effect vanishes  and the ratio $\overline{\eta}$ becomes a constant.  However, as $a$ gets larger, the concavity of the function $\eta$ increases.  When $a=1$, $\eta(x) = \eta_0 e^{x \wedge b}$ and is constant uniformly on $[b,\infty)$.  Clearly, the bankruptcy cost is monotonically decreasing in $a$.

For the tax benefits, let
\begin{align}
f(x) = P \hat{\gamma} \hat{\rho} \left( e^{x-c} \wedge 1 \right) \label{f_2_numerics}
\end{align} 
for some $c \in \R$.  While it is rather an oversimplification, this efficiently models the effective tax function, empirically obtained by \cite{Graham_Smith_1999}, which is strictly convex for small taxable income but is closer to linear for large taxable income.  Indeed, \eqref{f_2_numerics} is increasing on $(-\infty, c)$ and is constant on $[c,\infty)$, making its antiderivative  a desired convex function.  We call $c$ \emph{the degree of tax convexity}.  The value of tax benefits decreases monotonically as $c$ increases.

Our choice of \eqref{eta_bar_numerics}-\eqref{f_2_numerics} satisfy all the conditions in Assumption \ref{assump_optimality} and hence Proposition \ref{prop_based_on_assump} holds.


Regarding $X$, we consider the exponential diffusion in the form $X_t = \mu t + \sigma B_t - J_t$, $t \geq 0$, where $B$ is a standard Brownian motion and $J$ is a pure-jump \lev process with its \lev measure
\begin{align}
\Pi(\diff u) = \lambda \beta e^{-\beta u} \diff u, \quad u > 0. \label{lev_measure_exp}
\end{align}
Its Laplace exponent \eqref{laplace_spectrally_negative} is given by $\kappa(s) = \mu s + \frac 1 2 \sigma^2 s^2  + \lambda
\big( \frac {\beta} {\beta + s}  -1\big)$, $s \in \R$.
The scale function of this process has an explicit expression written in terms of a sum of exponential functions.
For every $q > 0$, there are two negative real roots $-\xi_{1,q}$ and $-\xi_{2,q}$ to the equation $\kappa(s)=q$ and, as is discussed in \cite{Egami_Yamazaki_2010_2}, its scale function is given by
\begin{align}
\begin{split}
W^{(q)}(x) &= -\sum_{i =1,2}  \frac 1 {\kappa'(-\xi_{i,q})} \left[ e^{\Phi(q) x} - e^{-\xi_{i,q}x} \right], \quad x \geq 0.
\end{split} \label{scale_function_exp}
\end{align}

By straightforward but tedious algebra, we can obtain each functional in the equity value  \eqref{equity_in_terms_of_scale} as well as the function $K_1^{(r,m)}$ in \eqref{function_K} that determines the optimal bankruptcy level.    

\begin{figure}[htbp]
\begin{center}
\begin{minipage}{1.0\textwidth}
\centering
\begin{tabular}{cc}
\includegraphics[scale=0.5]{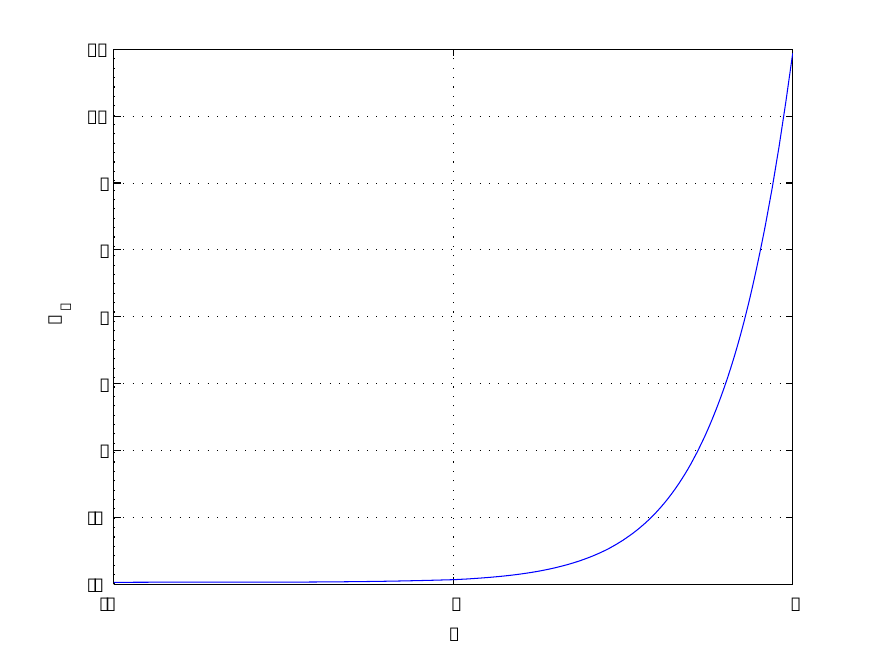}  & \includegraphics[scale=0.5]{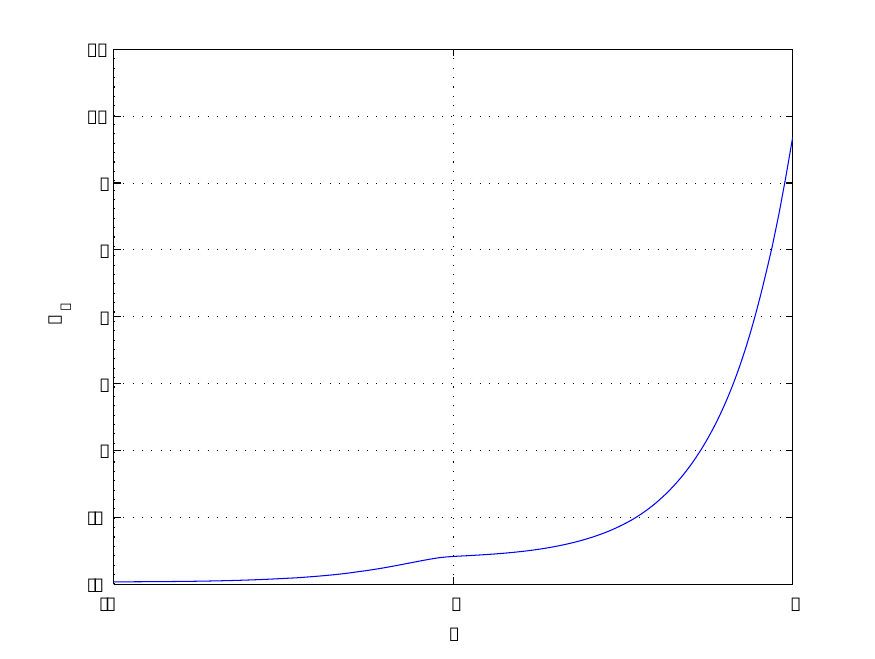} \\
case 1 & case 2
\end{tabular}
\end{minipage}
\caption{The plots of $K_1^{(r,m)}(B)$.   } \label{figure_K_1}
\end{center}
\end{figure}

We first illustrate how to compute the optimal bankruptcy level, firm/equity/debt values and the optimal capital structure.  We use $r=7.5\%$, $\delta=7\%$, $\hat{\gamma} = 35\%$, $\sigma = 0.2$, $\lambda=0.5$ and $\beta=9$ which were used in \cite{Hilberink_Rogers_2002, Kyprianou_Surya_2007, Leland_1994, Leland_Toft_1996}.   We also use $\hat{\rho} = 8.162\%$ and $m=0.2$, which were used in \cite{Chen_Kou_2009}.  We choose the drift term $\mu$ so that the martingale property $\kappa(1)=r-\delta$ is satisfied.  Regarding the parameters for $\eta$ and $f$ defined above, we consider the following two cases:
\begin{description}
\item[case 1]  $\eta_0=0.9$, $a = 0.5$, $b = 0$ and $c = 5$,
\item[case 2]  $\eta_0=0.5$, $a = 0.01$, $b = 5$ and $c = 0$.
\end{description}
In case 1, the slopes of $\overline{\eta}$ and $f$ are magnified by how the parameters are chosen.  On the other hand, in case 2, these values are constant at least when $x \in [0,5]$, making the model similar to \cite{Hilberink_Rogers_2002,Kyprianou_Surya_2007}.

\begin{figure}[htbp]
\begin{center}
\begin{minipage}{1.0\textwidth}
\centering
\begin{tabular}{cc}
\includegraphics[scale=0.5]{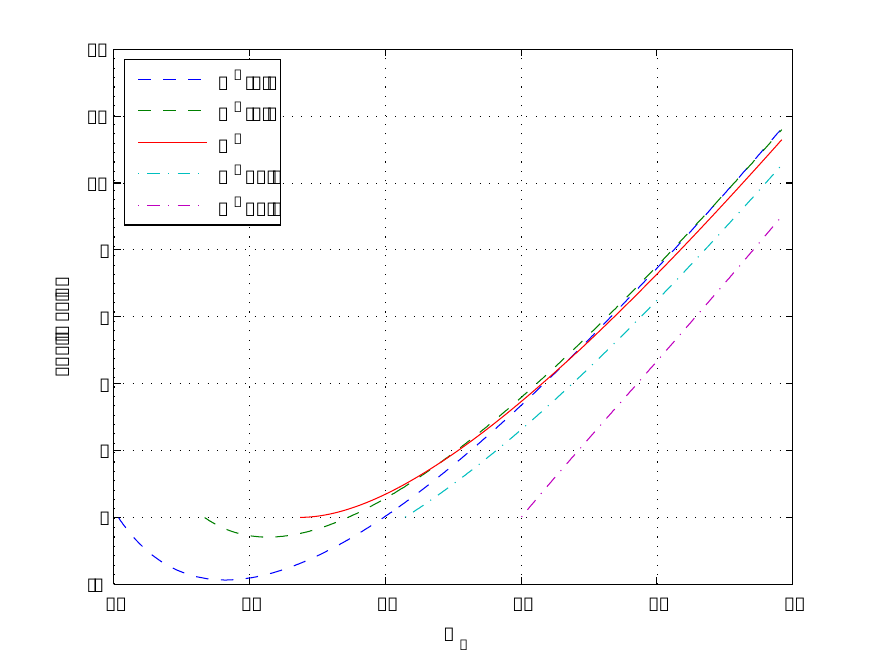}  & \includegraphics[scale=0.5]{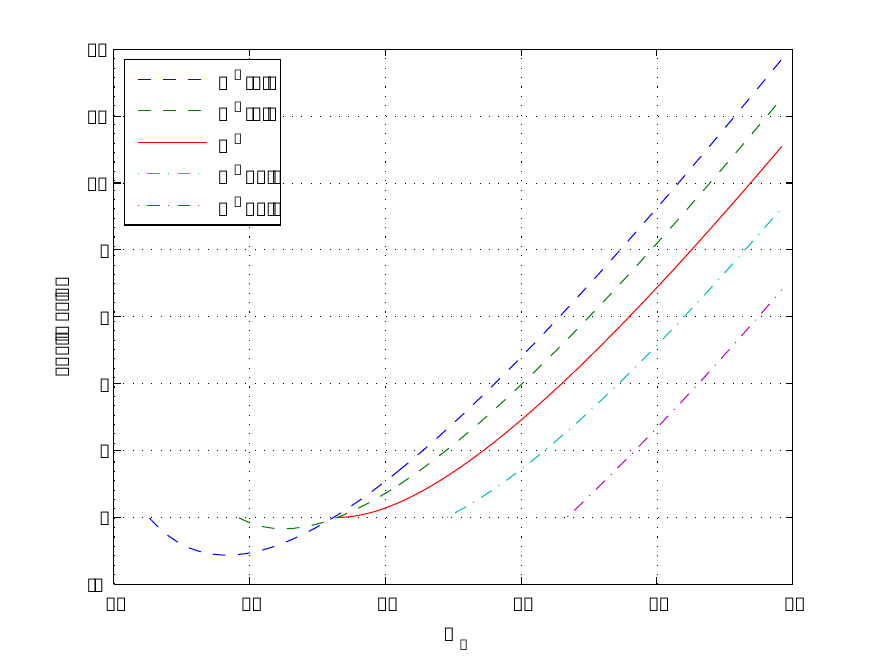} \\
case 1 & case 2
\end{tabular}
\end{minipage}
\caption{\small{The equity values as a function of $V_0$ for various values of $B$.}} \label{figure_value}
\end{center}
\end{figure}

\begin{figure}[htbp]
\begin{center}
\begin{minipage}{1.0\textwidth}
\centering
\begin{tabular}{cc}
\includegraphics[scale=0.5]{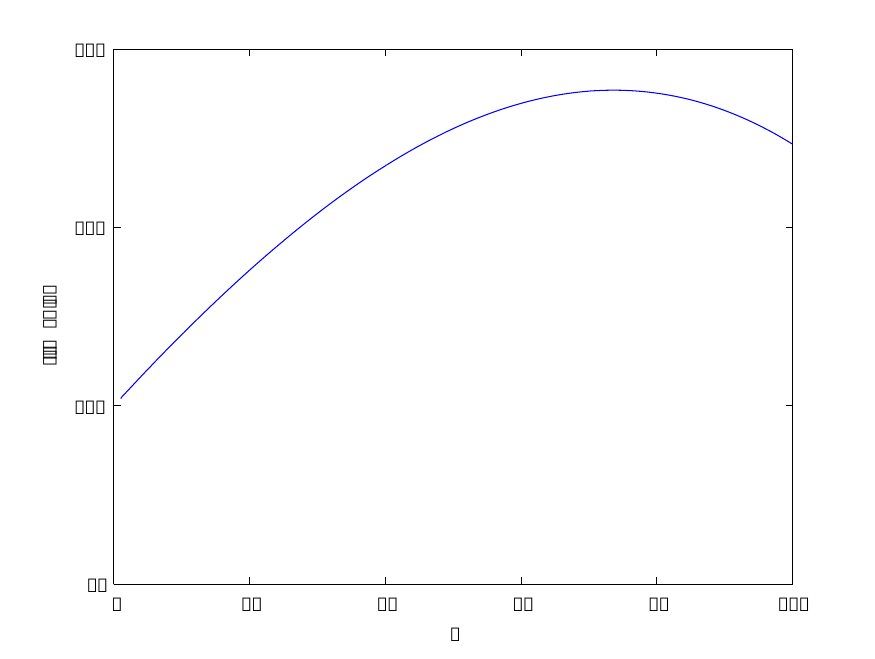}  & \includegraphics[scale=0.5]{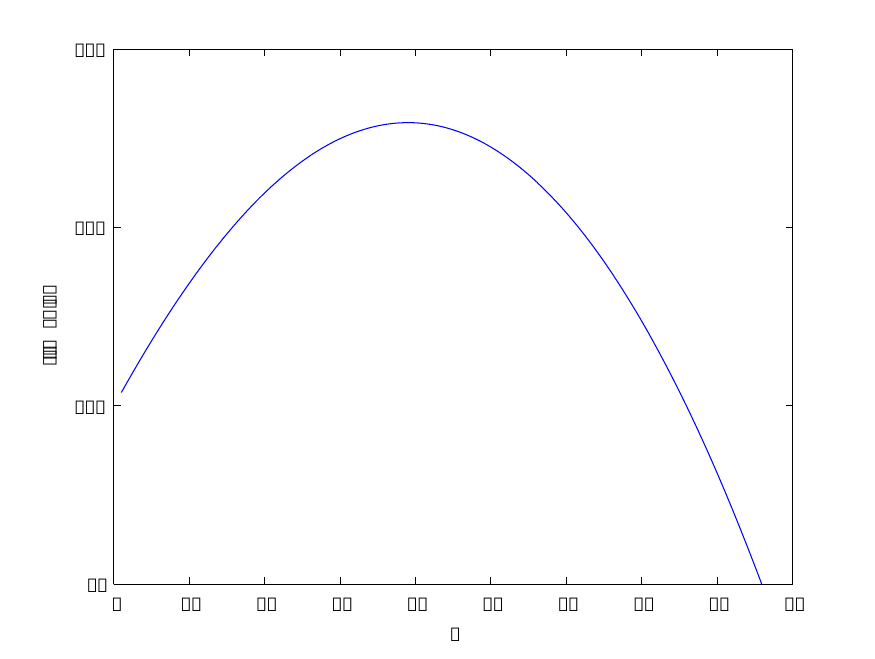} \\
case 1 & case 2
\end{tabular}
\end{minipage}
\caption{\small{The firm value as a function of $P$ for the two-stage problem. }} \label{figure_two_stage}
\end{center}
\end{figure}

Figure \ref{figure_K_1} shows the function $K_1^{(r+m)}$ in \eqref{function_K} as a function of $B$ when $P=50$.  As shown in the proof of Proposition \ref{prop_based_on_assump}, this is indeed monotonically increasing for both cases.   The optimal bankruptcy level $B^*$ can therefore be computed by the bisection method and we obtain $B^*=3.61$ and $B^* =3.64$ for cases 1 and 2, respectively.  In case 2, we notice a non-smooth point at zero and this is caused because $c=0$ in the definition of $f$ in \eqref{f_2_numerics}.  However, because $B^*$ is chosen larger than zero, the tax rate becomes constant until bankruptcy.   Furthermore, because $B^* < b$, the loss fraction $\overline{\eta}$ at bankruptcy ends up being a constant.  In case 1, on the other hand, the tax rate fluctuates over time and $\overline{\eta}$ is also asset-value-dependent.

Using the optimal bankruptcy levels $B^*$ computed above, we can compute the equity values.  In Figure \ref{figure_value}, we plot their values as a function of the asset value $V_0 = e^x \geq e^B$ for $B = B^*-0.2, B^*-0.1, B^*, B^*+0.1, B^*+0.2$.  As shown in Lemma \ref{lemma_above_B_star}, we can confirm that, when $B$ is taken lower than $B^*$,  it violates the limited liability constraint \eqref{constraint}.  For $B$ larger than $B^*$, the equity value is dominated by the value under $B^*$.  We note that continuous fit at $B$ always holds as in \eqref{cont_fit_equation}.  Also, as we have discussed in Remark \ref{remark_smooth_fit}, we also observe that smooth fit holds at $B^*$.

For the optimal capital structure, we solve the \emph{two-stage problem} as studied by \cite{Chen_Kou_2009,Leland_1994, Leland_Toft_1996} where the final goal is to choose $P$ (with $m$ and $x$ fixed constant) such that the firm's value $\mathcal{V}$ is maximized, namely, 
\begin{align}
\max_{P} \mathcal{V}(x; B^*(P), P) \label{two_stage_problem}
\end{align}
where we emphasize the dependency of $\mathcal{V}$ and $B^*$ on $P$.  As discussed in \cite{Chen_Kou_2009}, this is only a rough approximation of the optimal capital structure/bankruptcy strategy and does not fully capture the conflict of interest between debt holders and equity holders.  However, this is commonly used, and one major advantage of using this formulation is that \eqref{two_stage_problem} typically admits a global optimum $P^*$.   In particular, the concavity  of $\mathcal{V}(x; B^*(P), P)$ with respect to $P$ has been analytically shown by \cite{Chen_Kou_2009} for the double exponential jump diffusion case in their setting.  In our case with a generalized bankruptcy cost and tax rate functions, the proof of concavity is significantly difficult.  However, from our numerical results given below, it seems to hold at least with our specifications \eqref{eta_bar_numerics}-\eqref{f_2_numerics}; see Figure \ref{figure_two_stage} below.

We set $V_0 = 100$ (or $x = \log (100)$) and obtain $B^*$ for $P$ running from $0$ to $100$.  The corresponding firm value $\mathcal{V}$ is computed for each $P$ and $B^* = B^*(P)$, and is shown in Figure \ref{figure_two_stage}.  As can be seen, these are indeed concave and hence the optimal face values of debt $P^*=73.7$ and $P^*=39$ are obtained for cases 1 and 2, respectively.

\subsection{Numerical results on the scale effects}  We are now ready to study the impacts of scale effects on the optimal bankruptcy levels, equity/debt/firm values as well as the optimal capital structure.  We use the degree of bankruptcy cost concavity $a$ and tax convexity $c$ in \eqref{eta_bar_numerics}-\eqref{f_2_numerics} as proxies for scale effects  and study how these values change with $a$ and $c$.   The parameters for \eqref{eta_bar_numerics}-\eqref{f_2_numerics} are the same as case 1 above, unless specified otherwise.  We set $V_0 = 100$, and in particular, for Figures \ref{one_stage_bankruptcy} and \ref{one_stage_tax}, $P=50$.

\begin{figure}[htbp]
\begin{center}
\begin{minipage}{1.0\textwidth}
\centering
\begin{tabular}{cc}
\includegraphics[scale=0.4]{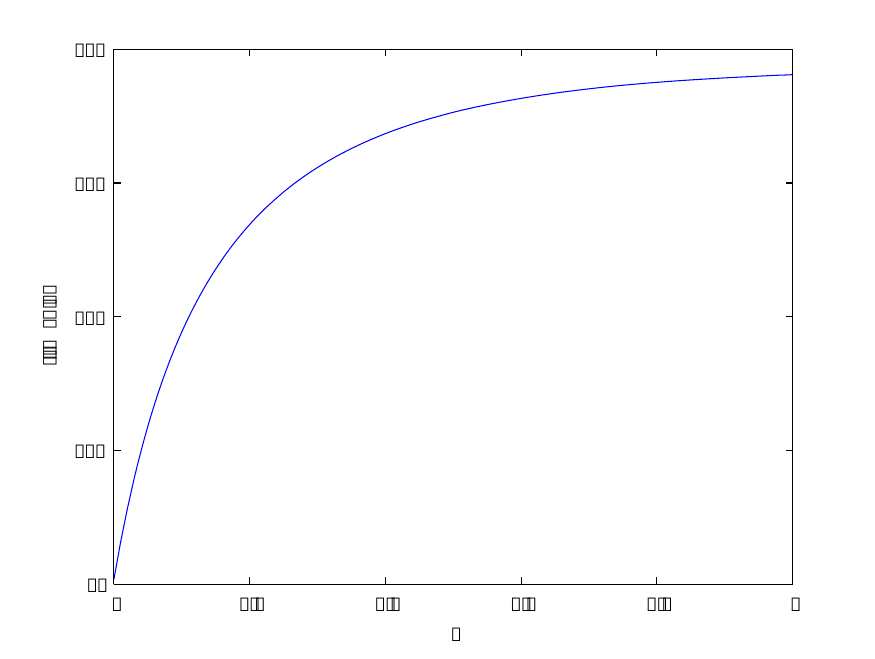} & \includegraphics[scale=0.4]{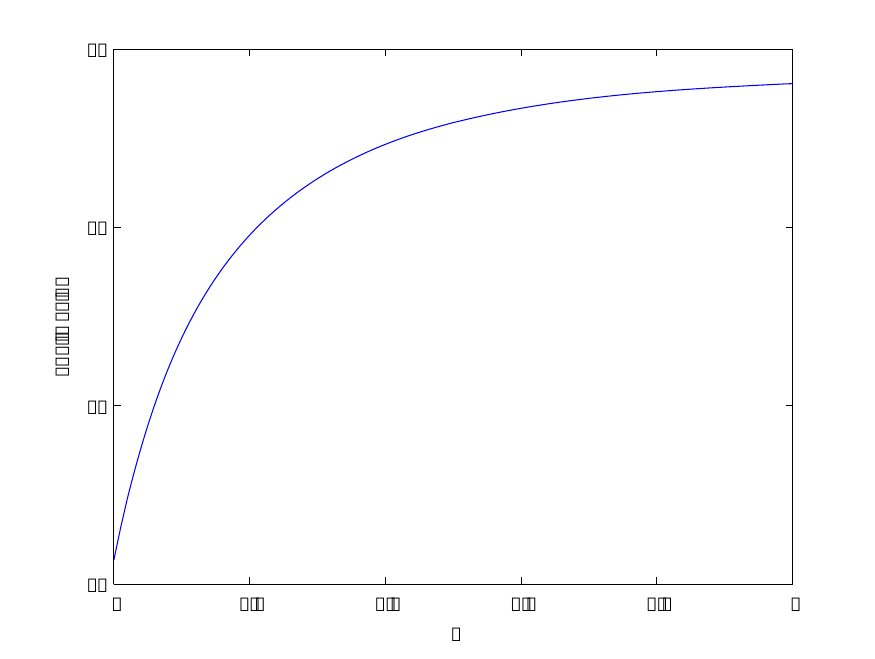}    \\
(i) firm value $\mathcal{V}(x; B^*)$ & (ii) equity value $\mathcal{E}(x; B^*)$   \\
 \includegraphics[scale=0.4]{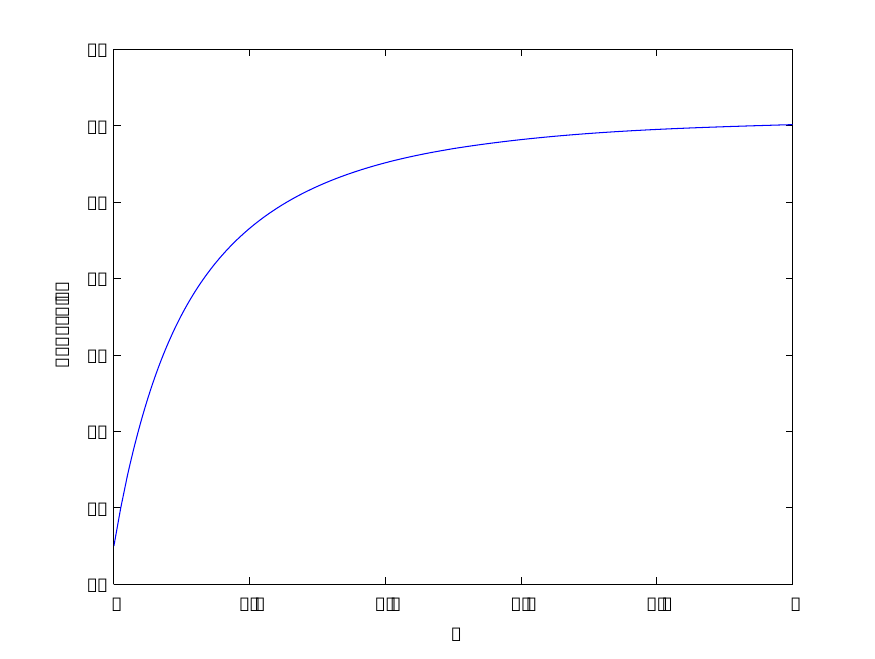} & \includegraphics[scale=0.4]{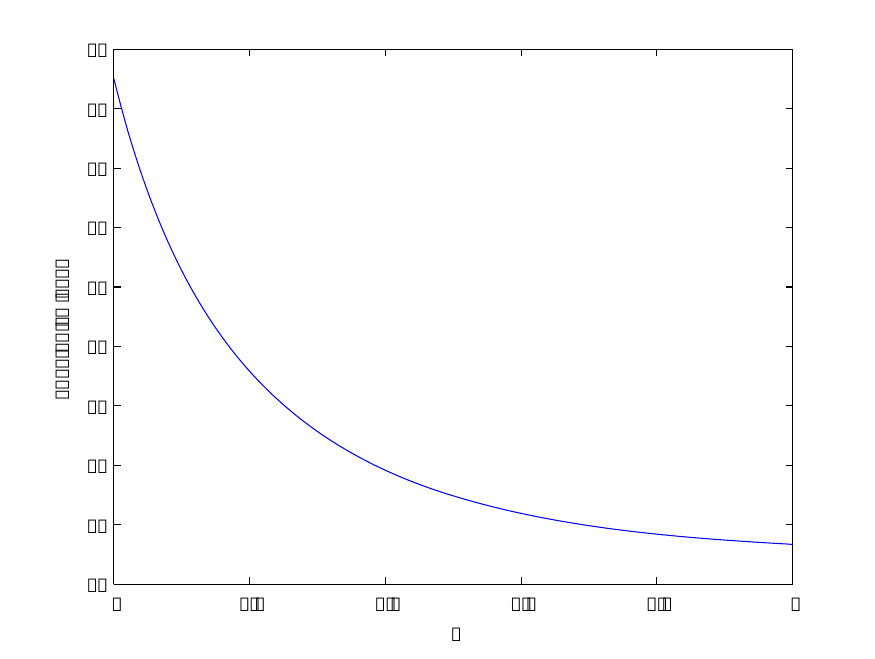}\\
 (iii) debt value $\mathcal{D}(x; B^*)$ & (iv) bankruptcy level $e^{B^*}$  
\end{tabular}
\end{minipage}
\caption{\small{Effects of bankruptcy cost concavity $a$ for fixed face value $P=50$.}}  \label{one_stage_bankruptcy}
\end{center}
\end{figure}


Figure \ref{one_stage_bankruptcy} shows the impacts of the bankruptcy cost concavity.  All the figures are with respect to $a$ running from $0$ to $1$.  Recalling that the bankruptcy cost is decreasing in $a$, most of the results are relatively straightforward.  As can be observed in (i)-(iii), each of the firm/equity/debt values is increasing.  From (iv), we see that  the bankruptcy level $B^*$ is decreasing.  This is because, as the bankruptcy cost decreases (or $a$ increases), the bankruptcy level can be lowered further without violating the limited liability constraint.  
Notice that all the figures here exhibit not only monotonicity but also concavity/convexity.

\begin{figure}[htbp]
\begin{center}
\begin{minipage}{1.0\textwidth}
\centering
\begin{tabular}{cc}
\includegraphics[scale=0.4]{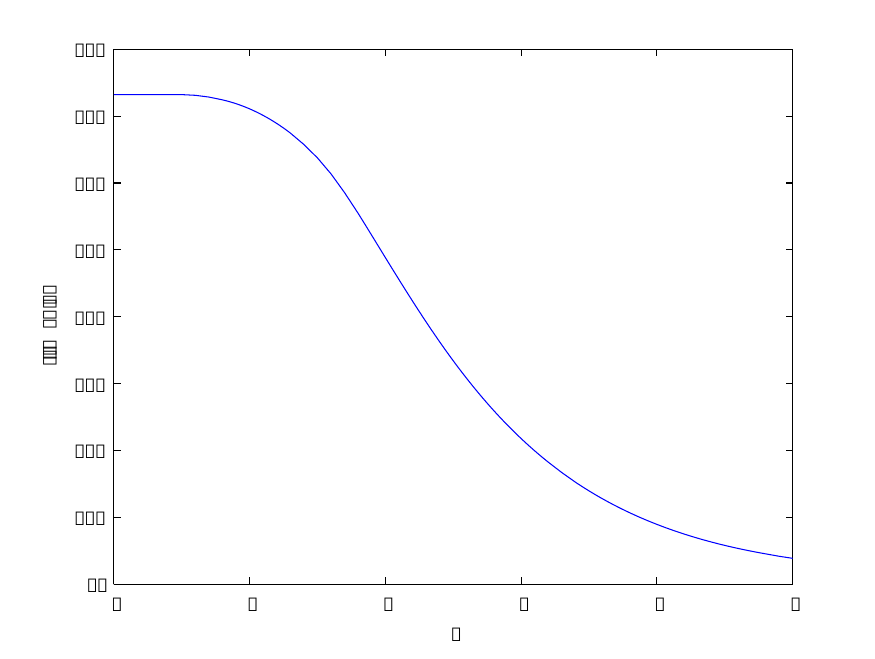} & \includegraphics[scale=0.4]{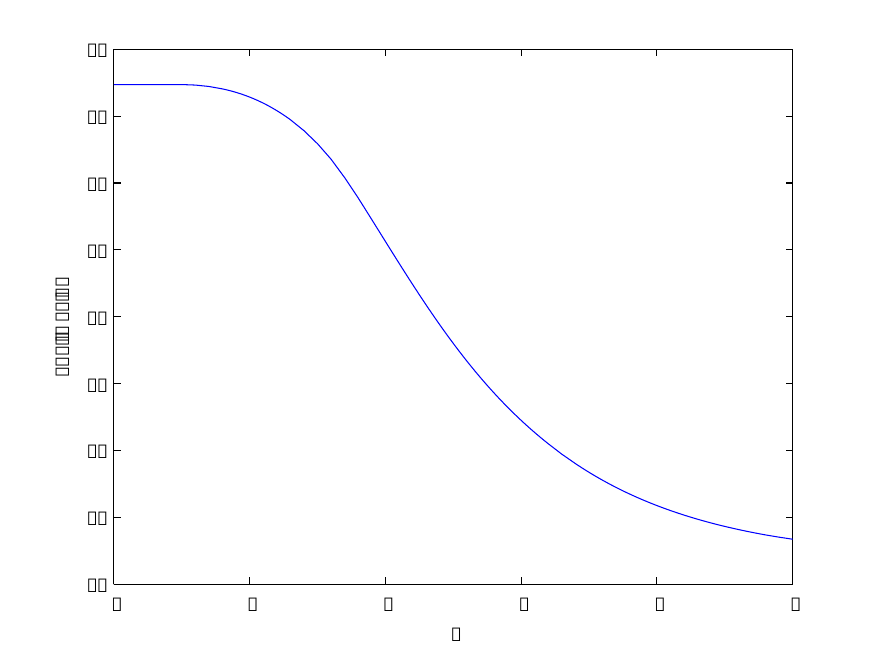}   \\
(i)  firm value $\mathcal{V}(x; B^*)$ & (ii) equity value $\mathcal{E}(x; B^*)$   \\
\includegraphics[scale=0.4]{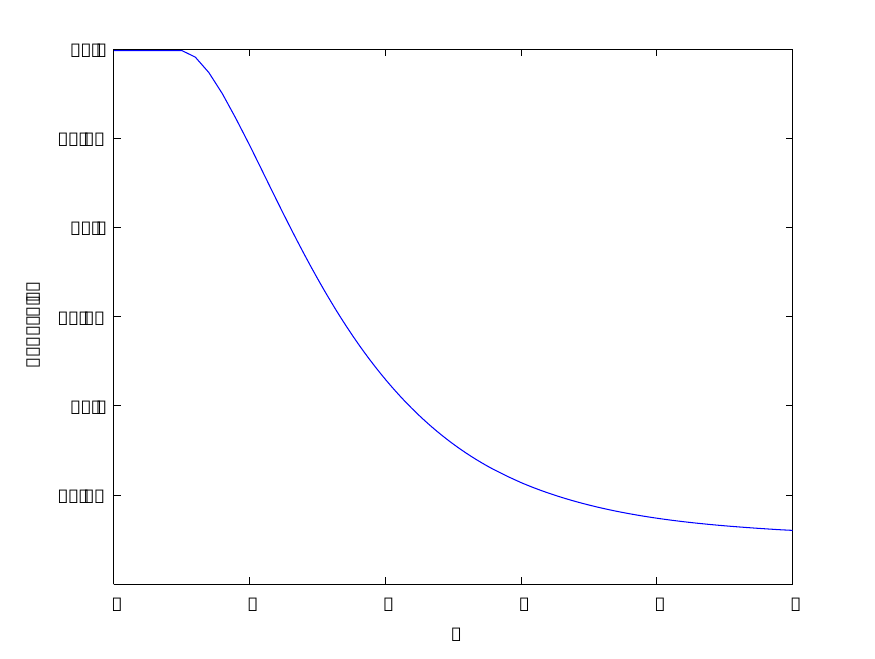} & \includegraphics[scale=0.4]{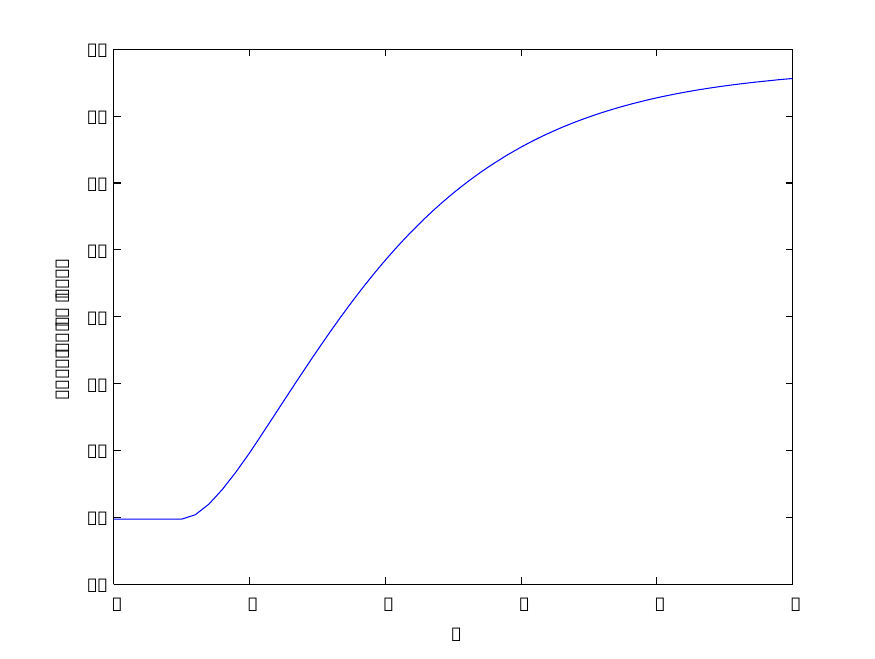} \\
 (iii) debt value $\mathcal{D}(x; B^*)$ & (iv) bankruptcy level $e^{B^*}$  
\end{tabular}
\end{minipage}
\caption{\small{Effects of tax convexity $c$ for fixed face value $P=50$.}}  \label{one_stage_tax}
\end{center}
\end{figure}

Figure \ref{one_stage_tax} shows the impacts of the tax convexity.  All the figures are with respect to $c$ running from $3$ to $8$.  Because it reduces the tax benefits monotonically, the results here are essentially opposite of those in Figure  \ref{one_stage_bankruptcy}.  Each of the firm/equity/debt values is decreasing.  The value of $B^*$ is increasing because it must be raised so as not to violate the limited liability constraint.  
Contrary to Figure  \ref{one_stage_bankruptcy}, these figures no longer exhibit concavity nor convexity.  This is due to the shape of the tax rate function \eqref{f_2_numerics}, which converges to zero as $c$ gets smaller and to the maximum tax benefit rate $P \hat{\gamma} \hat{\rho}$ as it gets larger.  This means that the marginal effect of changing the value of $c$ vanishes as $c$ gets sufficiently large or sufficiently small.   In particular, for $c$ smaller than the bankruptcy level $B^*$, because $f(x)$ is constant uniformly on $[c,\infty)$, changing the value of $c$ does not have any effect because $f(X_t)$ remains constant at any time before  bankruptcy.  This results in a flat line on each figure for small $c$.


\begin{figure}[htbp]
\begin{center}
\begin{minipage}{1.0\textwidth}
\centering
\begin{tabular}{cc}
\includegraphics[scale=0.4]{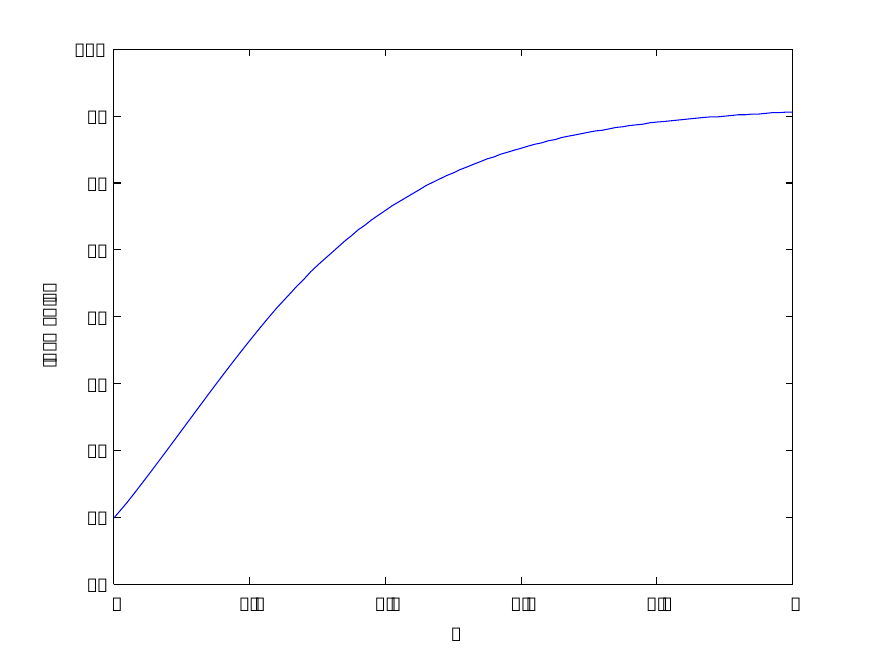}  & \includegraphics[scale=0.4]{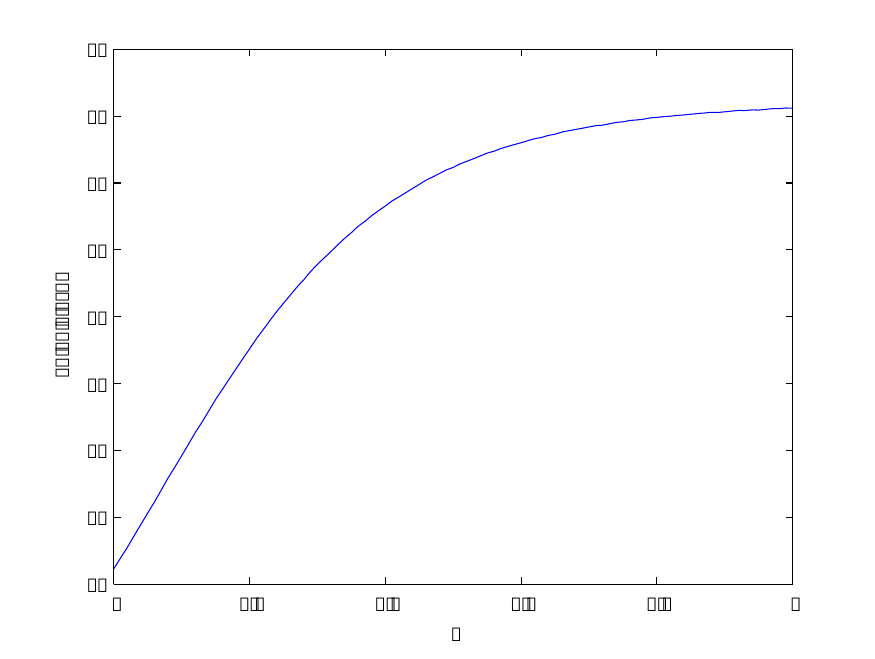} \\
(i) optimal face value $P^*(a)$ & (ii) bankruptcy level $e^{B^*}$  at $(a,P^*(a))$
\end{tabular}
\begin{tabular}{ccc}
\includegraphics[scale=0.4]{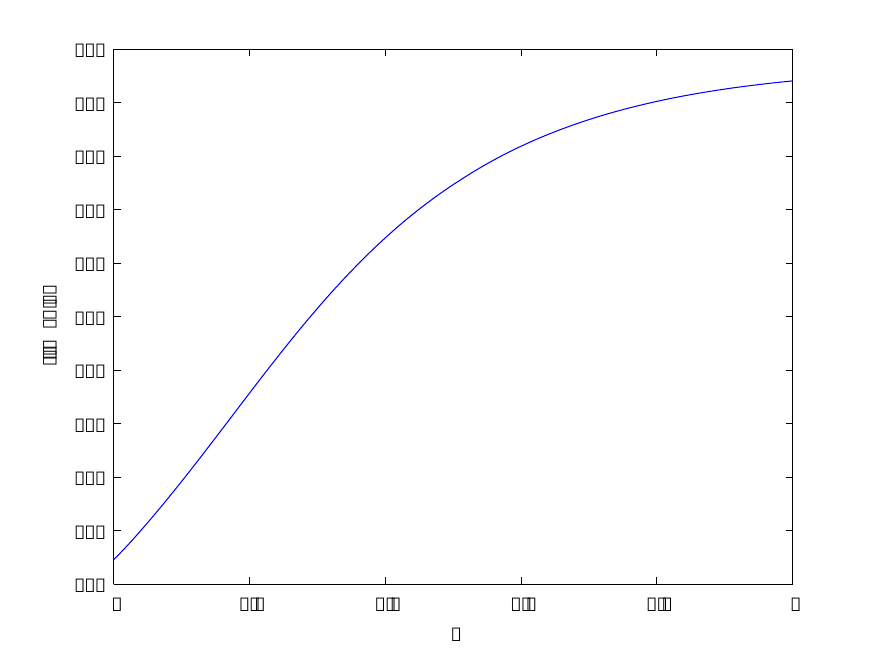}  & \includegraphics[scale=0.4]{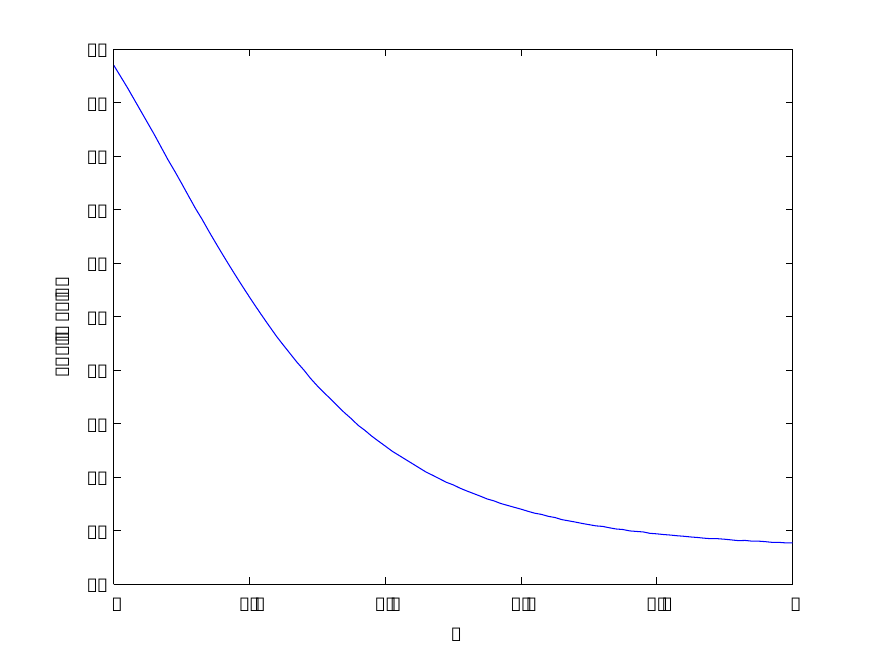}  & \includegraphics[scale=0.4]{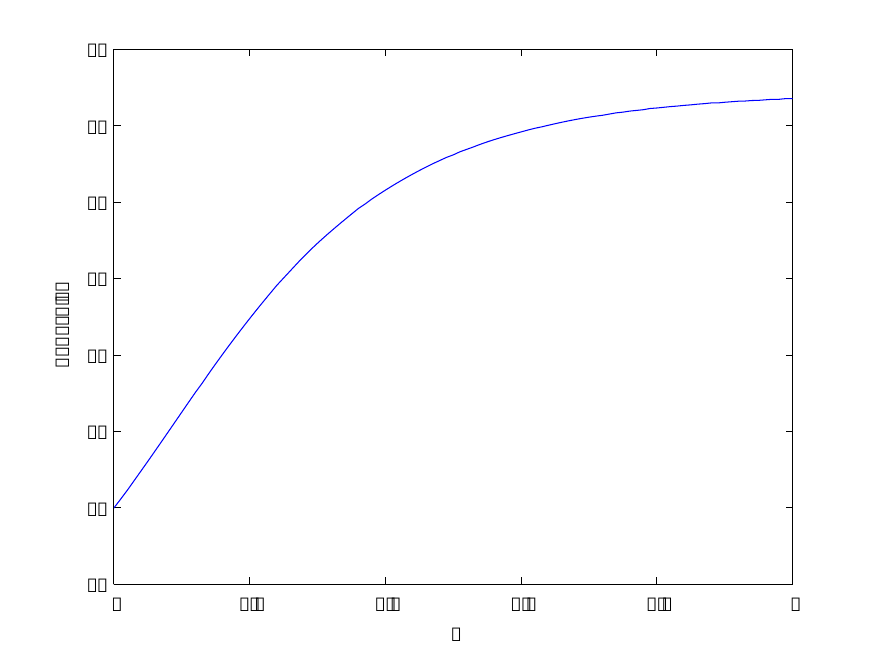} \\
(iii) firm value at $(a,P^*(a))$  & (iv) equity value  at $(a,P^*(a))$  & (v) debt value  at $(a,P^*(a))$ 
\end{tabular}
\end{minipage}
\caption{\small{Effects of bankruptcy cost concavity in the two-stage problem.}}  \label{two_stage_bankruptcy}
\end{center}
\end{figure}

\begin{figure}[htbp]
\begin{center}
\begin{minipage}{1.0\textwidth}
\centering
\begin{tabular}{cc}
\includegraphics[scale=0.4]{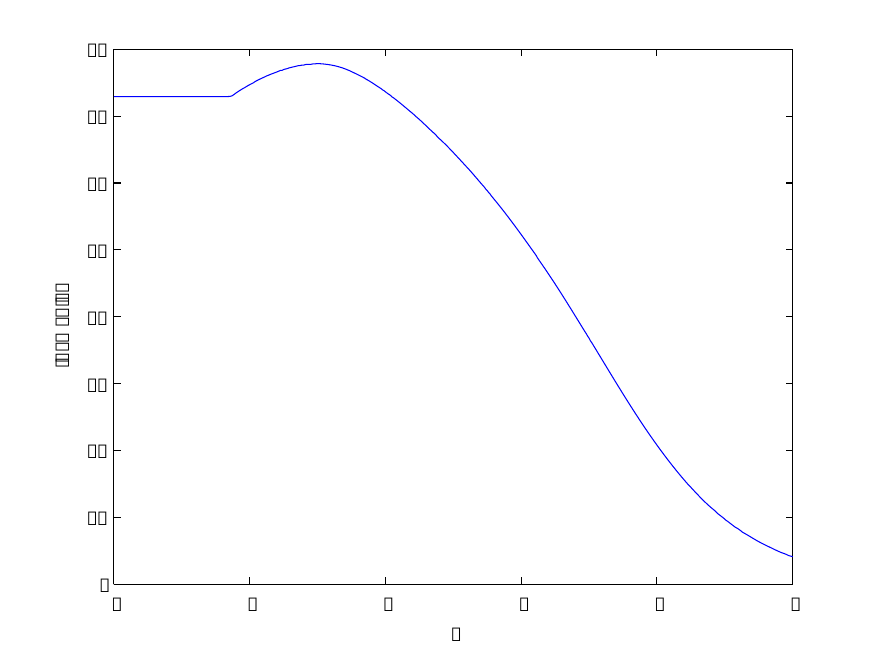}  & \includegraphics[scale=0.4]{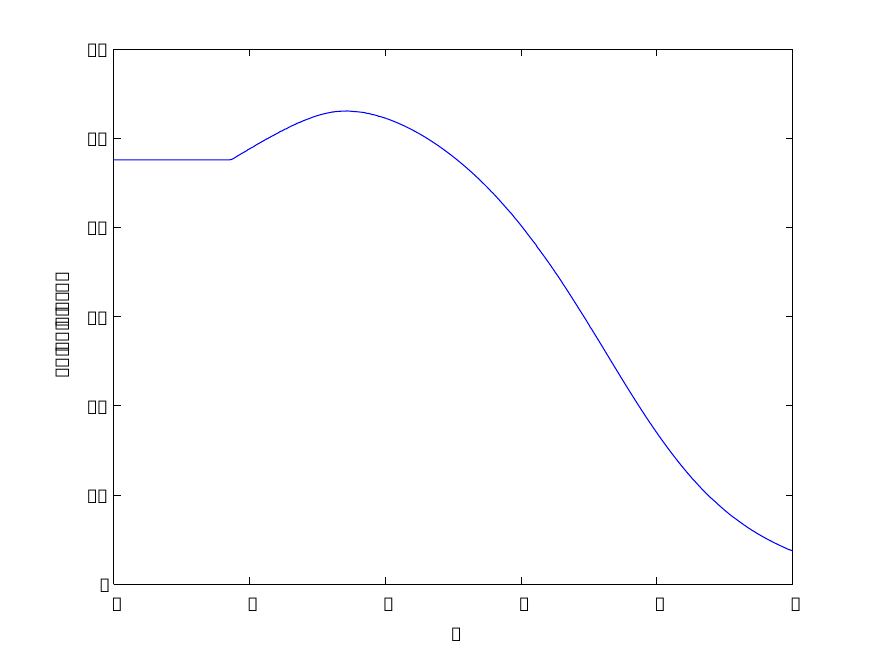} \\
(i) optimal face value $P^*(c)$ & (ii) bankruptcy level $e^{B^*}$  at $(c,P^*(c))$
\end{tabular}
\begin{tabular}{ccc}
\includegraphics[scale=0.4]{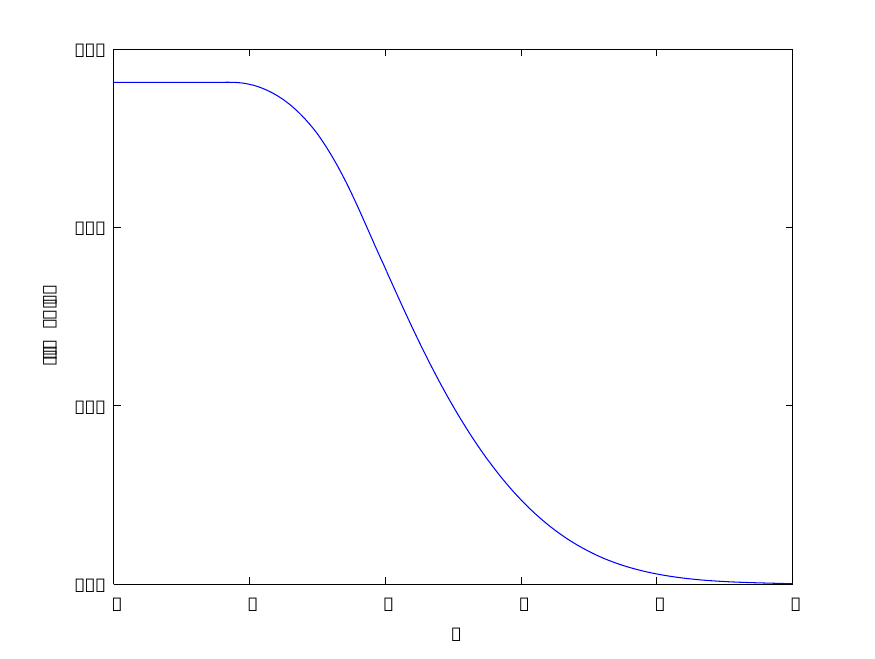}  & \includegraphics[scale=0.4]{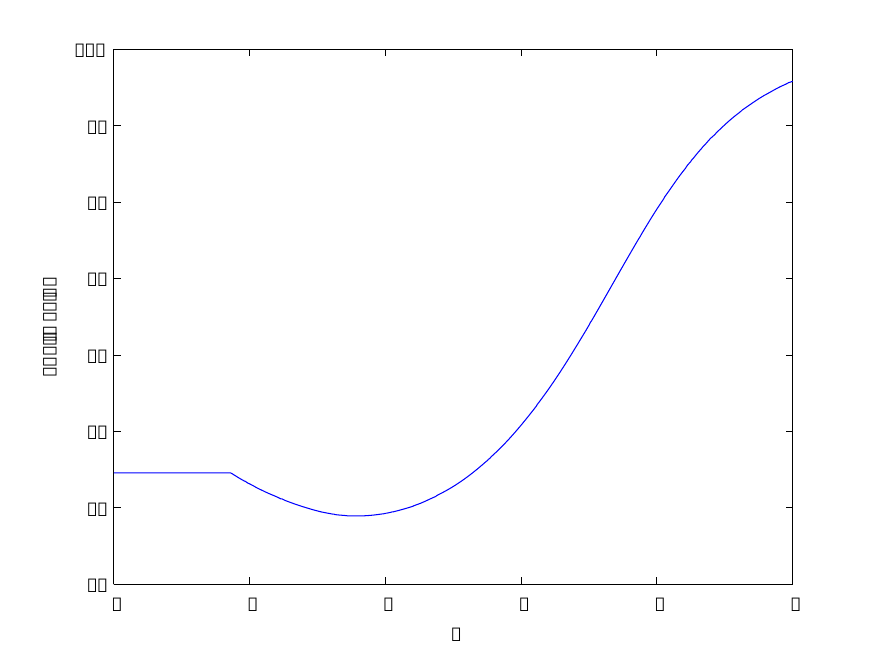}  & \includegraphics[scale=0.4]{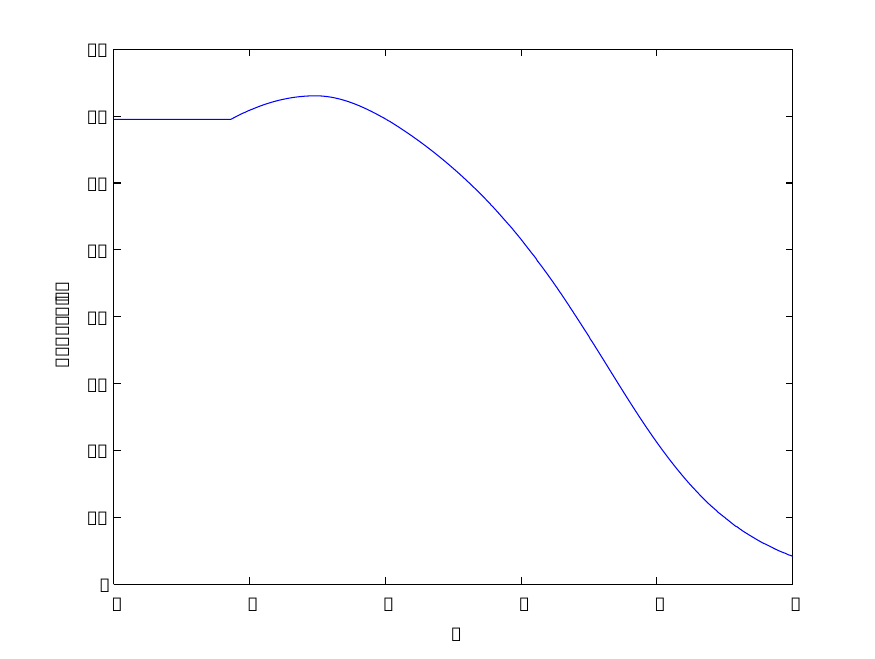} \\
(iii) firm value at $(c,P^*(c))$  & (iv) equity value  at $(c,P^*(c))$  & (v) debt value  at $(c,P^*(c))$ 
\end{tabular}
\end{minipage}
\caption{\small{Effects of tax convexity in the two-stage problem.}}  \label{two_stage_tax}
\end{center}
\end{figure}

We now move on to studying the impacts of the scale effects on the optimal capital structure.  We consider the two-stage problem \eqref{two_stage_problem} and let the face value $P$ be a control variable. 

Figure \ref{two_stage_bankruptcy} shows the impacts of the bankruptcy cost concavity on the optimal capital structure.  For each value of $a$ ranging from $0$ to $1$, we solve the two-stage problem \eqref{two_stage_problem} and compute (i) the optimal face value $P^*(a)$ as well as  (ii) bankruptcy level, and (iii)-(v) firm/equity/debt values corresponding to $(a, P^*(a))$.  From (i), we see that $P^*(a)$ is increasing.  This is because, as the bankruptcy cost decreases, the firm tends to rely more on debt financing  so as to enjoy more tax benefits.  While the problem is complicated due to the (conflicting) inner optimization where the equity holder chooses the bankruptcy level to maximize the equity value, it turns out that the (firm) value of the two-stage problem is monotonically decreasing in $a$ as seen in (iii).  
It can be confirmed that by choosing optimally the face value $P^*$, the firm value is necessarily no less than the unlevered asset value $V_0=100$ (which is attained by setting $P=0$).  Notice that this is not always true when the face value is fixed as in Figure  \ref{one_stage_bankruptcy}-(i).
The results in Figure \ref{two_stage_bankruptcy} are different from those in Figure \ref{one_stage_bankruptcy} because the face value also changes as $a$ changes.  Nonetheless, we still observe monotonicity and convexity/concavity in these figures.



Figure \ref{two_stage_tax} shows the impacts of tax convexity.  Similarly to Figure \ref{two_stage_bankruptcy}, for each value of $c$ ranging from $3$ to $8$, we compute the optimal face value $P^*(c)$ and the values corresponding to $(c,P^*(c))$.  Interestingly, the optimal face value $P^*(c)$ fails to be monotone and, for this reason, the bankruptcy level and equity/debt values also fail to be monotone.  Nevertheless,  the firm value, or the value of the two-stage problem,  is still monotonically decreasing.  The non-monotonicity of $P^*(c)$ is caused due to the nature of the tax benefits. As in the case of Figure  \ref{one_stage_tax}, the marginal effect of changing $c$ is negligible when $c$ is sufficiently large or sufficiently small.   For a medium value of $c$, we observe the monotonicity, which is consistent with the intuition that the tax convexity decreases the tax benefits and hence also the leverage.  However, for a neighborhood of $c$ where the marginal effect is very small but not exactly zero,  non-monotonicity of the optimal face value can happen due to the two conflicting maximizations of the equity value and the firm value. Also recall that the tax benefits vanish as $c$ increases.  We can confirm that, as $c$ increases, the optimal face value $P^*(c)$ converges to zero, thereby making the debt worthless, and the firm value converges to the unlevered asset value $V_0=100$.


%
%
%
%
%
%

In summary, scale effects have significant impacts on the capital structure.  Depending on the choice of bankruptcy cost and tax rate functions, the results can also become very complicated.
The computation we conducted here is efficient and does not rely on heavy algorithms.
 This is partly due to the form of its scale function \eqref{scale_function_exp} and also to the choice of $\eta$ and $f$ given above.  However, this can be extended very easily to the phase-type case; see \cite{Egami_Yamazaki_2010_2}.  For other spectrally negative \lev processes with explicit forms of scale functions, see \cite{Hubalek_Kyprianou_2009, Kyprianou_2008, Kyprianou_Surya_2007}.
We also remark that the solutions can in principle be computed numerically for any choice of spectrally negative \lev process by using the approximation algorithms of the scale function such as \cite{Egami_Yamazaki_2010_2,Surya_2008}.

%


\appendix

\section{Proofs}

\begin{proof}[Proof of Lemma \ref{lemma_Gamma_derivative_B}]
By  \eqref{def_Gamma} and \eqref{def_Theta}, the left-hand side equals
\begin{multline*}
e^B ( \Gamma^{(r+m)}(x-B)  - \Gamma^{(r+m)'}(x-B)  ) = - e^B \left[ \frac {\kappa(1)-(r+m)} {1 - \Phi(r+m)} W^{(r+m)'}(x-B)  \right. \\  \left. + (\kappa(1)-(r+m)) W^{(r+m)} (x-B) - \frac {\kappa(1)-(r+m)} {1 - \Phi(r+m)} W^{(r+m)}(x-B) \right],
\end{multline*}
which equals the right-hand side.
\end{proof}

\begin{proof}[Proof of Proposition \ref{prop_derivative_B}]
Applying Lemmas \ref{lemma_M_derivative_B}-\ref{lemma_Gamma_derivative_B} in \eqref{equity_in_terms_of_scale},
\begin{align*}
\frac {\partial} {\partial B}\mathcal{E} (x; B) = &- \Theta^{(r+m)}(x-B) \frac {\kappa(1)-(r+m)} {1 - \Phi(r+m)} e^B  \\ &+ \Theta^{(r+m)}(x-B) \Big[\frac {P \hat{\rho} + p} {\Phi(r+m)} + \frac {r+m} {\Phi(r+m)} \eta(B) + H^{(r+m)}(B) + \frac {\sigma^2} 2 \eta'(B)  \Big] \\
&-\Theta^{(r)}(x-B)  \Big[ G^{(r)}(B) + \frac r {\Phi(r)} \eta(B)  + H^{(r)}(B) + \frac {\sigma^2} 2 \eta'(B)  \Big]. 
\end{align*}
The claim holds by replacing the last term with
\begin{align*}
&- \Theta^{(r+m)}(x-B) \Big[ G^{(r)}(B) + \frac r {\Phi(r)} \eta(B)  + H^{(r)}(B) + \frac {\sigma^2} 2 \eta'(B)  \Big] \\
&- (\Theta^{(r)}(x-B) -  \Theta^{(r+m)}(x-B)) \Big[ G^{(r)}(B) + \frac r {\Phi(r)} \eta(B)  + H^{(r)}(B) + \frac {\sigma^2} 2 \eta'(B)  \Big].
\end{align*}
\end{proof}

\begin{proof}[Proof of Remark \ref{remark_j}]
Because, for any $q > 0$, $\frac q{\Phi(q)}= c + \frac 1 2 \sigma^2 \Phi(q) + \int_{(0,\infty)} \Pi(\diff u) \big( \frac {e^{-\Phi(q) u} - 1} {\Phi(q)} + u 1_{\{u \in (0,1)\}}\big)$,
\begin{align*}
\frac {r+m} {\Phi(r+m)} - \frac r {\Phi(r)}  &= \frac 1 2 \sigma^2 \left( \Phi(r+m) - \Phi(r) \right) - \int_{(0,\infty)} \Pi(\diff u) \left( \frac {e^{-\Phi(r) u} - 1} {\Phi(r)} + u 1_{\{u \in (0,1)\}}\right) \\ & \qquad + \int_{(0,\infty)} \Pi(\diff u) \left( \frac{e^{-\Phi(r+m) u} - 1} {\Phi(r+m)} +  u 1_{\{u \in (0,1)\}}\right) \\ &= \frac 1 2 \sigma^2 \left( \Phi(r+m) - \Phi(r) \right) + \int_{(0,\infty)} \Pi(\diff u) \left( \frac {1-e^{-\Phi(r) u} } {\Phi(r)} - \frac {1 - e^{-\Phi(r+m) u}} {\Phi(r+m)}\right). 
\end{align*}
%
Substituting this in \eqref{def_J}, we obtain the result.

\end{proof}

\begin{proof}[Proof of Lemma \ref{lemma_Theta_convergence}]
The result for the case $\sigma > 0$ is clear because, by \eqref{at_zero},  $W^{(q)}(0) = 0$ and $W^{(q)'}(0+) = 2 / \sigma^2$ for any $q > 0$.  Suppose $\sigma = 0$.  Then as in the proof of Lemma 4.4 of \cite{Kyprianou_Surya_2007}, we have
\begin{multline*}
\lim_{x \downarrow 0} \big[ W^{(r)'}(x) - W^{(r+m)'}(x) \big]  = \lim_{\lambda \uparrow \infty} \int_0^\infty \lambda e^{-\lambda x} \left[ W^{(r)'}(x) - W^{(r+m)'}(x) \right]  \diff x \\
= \lim_{\lambda \uparrow \infty} \left[ \frac {\lambda^2} {\kappa(\lambda) - r} - \frac {\lambda^2} {\kappa(\lambda) - (r+m)} \right] = -m \lim_{\lambda \uparrow \infty} \left[ \frac {\lambda} {\kappa(\lambda) - r}  \frac {\lambda} {\kappa(\lambda) - (r+m)} \right] = 0.
\end{multline*}
Here the last equality holds because for any $q > 0$, due to Fatou's lemma and $\int_{(0,1)} x \Pi(\diff x) = \infty$,
\begin{align*}
\frac {\kappa(\lambda) - q} \lambda = c 
 +\int_{( 0,\infty)}\left( \frac {e^{-\lambda
x}-1} \lambda + x 1_{\{0 <x<1\}} \right) \,\Pi(\diff x) - \frac q \lambda \xrightarrow{\lambda \uparrow \infty} \infty.
\end{align*}
This together with $W^{(r)}(0) = W^{(r+m)}(0) = 0$ shows the result.
\end{proof}

\begin{proof}[Proof of Lemma \ref{lemma_above_B_star}]  The proof is trivial when $B^* = -\infty$ and hence assume that $B^* \in (-\infty, \infty]$.

(i) Suppose $X$ is of bounded variation and fix $\widehat{B} < B^*$.  By \eqref{cond_K_1}, we have $K_1^{(r,m)}(\widehat{B}) < 0$.  But by \eqref{cont_fit_equation}, this implies $\mathcal{E}(\widehat{B}+; \widehat{B}) < 0$, violating \eqref{constraint}.  Therefore, those $B$ satisfying \eqref{constraint} must lie on $[B^*,\infty)$.

(ii) Suppose $X$ is of unbounded variation and again fix $\widehat{B} < B^*$.
For any sufficiently small $\delta > 0$ such that $K_1^{(r,m)}(x) < 0$ for every $\widehat{B} < x < \widehat{B} + \delta$, we have by  Proposition \ref{prop_derivative_B}
\begin{multline*}
\inf_{\widehat{B} \leq x \leq \widehat{B}+\delta, \widehat{B} \leq y \leq x} \left. \frac {\partial} {\partial B}\mathcal{E} (x; B) \right|_{B=y}\geq   \inf_{0 \leq y \leq \delta}\Theta^{(r+m)}(y) \inf_{\widehat{B} \leq B \leq \widehat{B} +\delta}|K_1^{(r,m)}(B)| \\ - \sup_{0 \leq y \leq \delta}\{\Theta^{(r)}(y) - \Theta^{(r+m)}(y) \} \sup_{\widehat{B} \leq B \leq \widehat{B}+\delta}|K_2^{(r)}(B)|.
\end{multline*}
This converges to some strictly positive value as $\delta \downarrow 0$ by Lemma \ref{lemma_Theta_convergence}, \eqref{at_zero} and \eqref{def_Theta}; namely,
\begin{align*}
\inf_{\widehat{B} \leq x \leq \widehat{B}+\delta_0, \widehat{B} \leq y \leq x}\left. \frac {\partial} {\partial B}\mathcal{E} (x; B) \right|_{B=y}  > 0,
\end{align*}
 for some $\delta_0 > 0$.
But by \eqref{cont_fit_equation}, $\mathcal{E}(\widehat{B}+\delta_0+; \widehat{B}+\delta_0) = 0$, implying $\mathcal{E}(\widehat{B}+\delta_0; \widehat{B}) < 0$, violating \eqref{constraint}.  
\end{proof}

\begin{proof}[Proof of Lemma \ref{lemma_small_j}]
Define, as the Laplace exponent of $X$ under $\p^1$ with the change of measure $ \left. \frac {\diff \p^1_0} {\diff \p_0}\right|_{\mathcal{F}_t} = e^{X_t - \kappa(1) t}$,
\begin{align*}
\kappa_1(\beta) :=\Big(  \sigma^2  + c - \int_{(0,1)} u (e^{-u}-1) \Pi(\diff u)\Big) \beta
+\frac{1}{2}\sigma^2 \beta^2 +\int_{( 0,\infty)}(e^{- \beta u}-1 + \beta u 1_{\{ u \in (0,1) \}}) e^{-u}\,\Pi(\diff u).
\end{align*}
Then, $\kappa_1(\Phi(q)-1) = \kappa(\Phi(q)) - \kappa(1) = q - \kappa(1)$ for any $q > 0$; see page 215 of \cite{Kyprianou_2006}.  This shows
\begin{align*}
\frac {\kappa(1)-r} {1-\Phi(r)} &= \frac {\kappa_1(\Phi(r)-1)} {\Phi(r)-1} 
= \sigma^2  + c - \int_{(0,1)} u (e^{-u}-1) \Pi(\diff u) \\
&+\frac{1}{2}\sigma^2 (\Phi(r)-1) +\int_{( 0,\infty)} \Big(\frac {e^{- (\Phi(r)-1) u}-1} {\Phi(r)-1} +  u 1_{\{ u \in (0,1) \}} \Big) e^{-u}\,\Pi(\diff u), \\
\frac {\kappa(1)-(r+m)} {1-\Phi(r+m)} &= \frac {\kappa_1(\Phi(r+m)-1)} {\Phi(r+m)-1} 
=  \sigma^2  + c - \int_{(0,1)} u (e^{-u}-1) \Pi(\diff u) \\
&+\frac{1}{2}\sigma^2 (\Phi(r+m)-1) +\int_{( 0,\infty)} \Big(\frac {e^{- (\Phi(r+m)-1) u}-1} {\Phi(r+m)-1} +  u 1_{\{ u \in (0,1) \}} \Big) e^{-u}\,\Pi(\diff u).
\end{align*}
Subtracting the former from the  latter, we obtain the result.
\end{proof}

\bibliographystyle{abbrv}
\bibliographystyle{apalike}

\bibliographystyle{agsm}
\bibliography{capital_structure_bib}

\end{document}